\newtheorem{theorem}{Theorem}
\newtheorem{lemma}{Lemma}
\newtheorem{remark}{Remark}
\begin{document}

\title{The Exact Rate Memory Tradeoff for Small Caches with Coded Placement}

\author{Vijith Kumar K P,~
	Brijesh Kumar Rai~and~Tony Jacob}

\maketitle

\begin{abstract}
	The idea of coded caching was introduced by Maddah-Ali and Niesen who demonstrated the advantages of coding in caching problems. To capture the essence of the problem, they introduced the $(N, K)$ canonical cache network in which $K$ users with independent caches of size $M$ request files from a server that has $N$ files. Among other results, the caching scheme and lower bounds proposed by them led to a characterization of the exact rate memory tradeoff when $M\geq \frac{N}{K}(K-1)$. These lower bounds along with the caching scheme proposed by Chen et al.  led to a characterization of the exact rate memory tradeoff when $M\leq \frac{1}{K}$. In this paper we focus on small caches where $M\in \left[0,\frac{N}{K}\right]$ and derive new lower bounds. For the case when $\big\lceil\frac{K+1}{2}\big\rceil\leq N \leq K$ and $M\in \big[\frac{1}{K},\frac{N}{K(N-1)}\big]$, our lower bounds demonstrate that the caching scheme introduced by G{\'o}mez-Vilardeb{\'o} is optimal and thus extend the characterization of the exact rate memory tradeoff. For the case $1\leq N\leq \big\lceil\frac{K+1}{2}\big\rceil$, we show that the new lower bounds improve upon the previously known lower bounds.
	
\end{abstract}
\begin{IEEEkeywords}
Coded caching,  coded placement, exact rate memory tradeoff, lower bounds.
\end{IEEEkeywords}

\IEEEpeerreviewmaketitle

\section{Introduction}

Content distribution networks use memory distributed across the network, known as caches, to reduce the peak time data traffic by keeping copies of file fragments near the end-users. These techniques, known as caching techniques, generally operate in two phases. In the first phase, called the placement phase, the server fills the caches with fragments of files available in the server. In the second phase, called the delivery phase, the server broadcasts a set of packets to meet each user's requests, aided by the caches available near to the user. Maddah-Ali and Niesen, in their seminal work \cite{maddah2014fundamental}, noted that traditional caching techniques fail to exploit the multicast opportunity available in such networks. To address this limitation, they introduced the notion of coded caching and proposed a scheme to demonstrate that coding reduces the peak data traffic load over traditional uncoded caching schemes. They introduced the canonical $(N, K)$ cache network where the server has $N$ files $\{W_{1},\dots,W_{N}\}$ and is communicating with $K$ users $\{U_{1},\dots, U_{K}\}$ through a common shared error-free link. Here, each user $U_{k}$ is equipped with an isolated cache $Z_{k}$ of size $M\in [0, N]$ as shown in Fig. \ref{Fig: (N,K) cache network}.
\begin{figure}[h]
	\centering
	\input{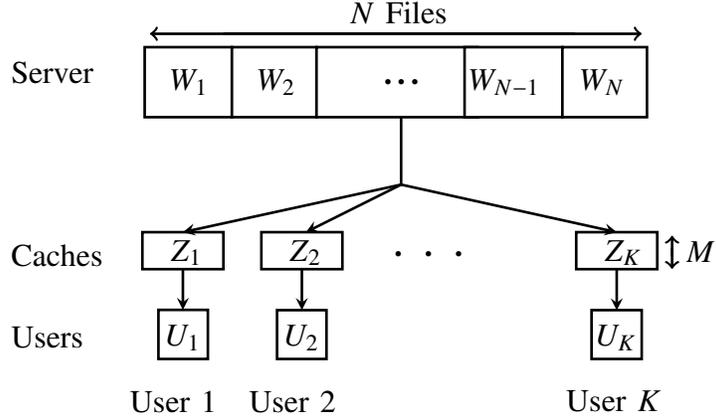}
	\caption{The $(N,K)$ cache network}
	\label{Fig: (N,K) cache network}
\end{figure}
During the placement phase, the server fills each user's cache without knowing their future demands. Let the users' requests be represented by $\textbf{\textit{d}}=(W_{d_{1}},\dots, W_{d_{K}})$, where $W_{d_{l}}$ is the file requested by user $U_l$. During the delivery phase, when demand $\textbf{\textit{d}}$ is revealed, the server broadcasts a set of packets $X_{\textbf{\textit{d}}}$ of size $R_{\textbf{\textit{d}}}(M)$ so that users can obtain their requested file, aided by their cache contents. The quantity $R_{\textbf{\textit{d}}}(M)$ is called the load experienced by the network. The fundamental issue in a caching scheme is to decide what to store in each user's cache and, accordingly, what to broadcast to fulfill each user's demand such that the load experienced by the shared link is minimized. Variations of this problem have been introduced for decentralized networks \cite{maddah2015decentralized}, hierarchical networks \cite{karamchandani2016hierarchical}, multiple servers \cite{shariatpanahi2016multi}, heterogeneous networks \cite{daniel2019optimization}, D2D networks \cite{yapar2019optimality}, and shared cache networks \cite{parrinello2019coded}. Issues of privacy \cite{ravindrakumar2017private} and data shuffling\cite{wan2018fundamental} have also been studied in this setup.

As a result of the inherent symmetry of the problem of coded caching for the canonical $(N,K)$ cache network, all demands related to each other through a permutation are naturally grouped together. This leads to the consideration of symmetric caching schemes which were shown by Tian \cite{tian2016symmetry} to operate at the same rate. Thus, we consider only the class of symmetric caching schemes in this paper. Consider a demand $\textbf{\textit{d}}$, where the user $U_l$ requires the file $W_{d_{l}}$,
\begin{equation}
\textbf{\textit{d}}=(W_{d_{1}},\dots, W_{d_{K}})
\end{equation}
Let $\pi(.)$ be a permutation operation defined over the set $\{1,\dots, K\}$ and $\pi^{-1}(.)$ be its inverse. Now consider another demand $\pi \textbf{\textit{d}}$, which is obtained by permuting the files requested by the users,
\begin{equation}
\pi \textbf{\textit{d}}=(W_{d_{\pi^{-1}(1)}},\dots,W_{d_{\pi^{-1}(K)}}).
\end{equation}
In the demand $\pi \textbf{\textit{d}}$, the user $U_{\pi(l)}$ requires the file $W_{d_{l}}$. In response to the demand $\pi \textbf{\textit{d}}$, the server broadcasts a set of packets $X_{\pi\textbf{\textit{d}}}$. For a symmetric caching scheme, we have \cite{tian2016symmetry}
\begin{equation}
\label{symmetry}
H(W_{d_{l}},Z_{\pi(l)},X_{\pi\textbf{\textit{d}}})=H(W_{d_{l}},Z_{l},X_{\textbf{\textit{d}}})
\end{equation}
Consider the demands where each of the $N$ files is required by at least one user (and hence $N\leq K$). The set of all such demands is denoted by $\textbf{\textit{D}}$ and the corresponding rate is denoted by $R(M)$, where 
\begin{equation}
R(M)=\max\{R_{\textbf{\textit{d}}}(M)\mid \textbf{\textit{d}}\in \textbf{\textit{D}}\}.
\end{equation}
For the $(N,K)$ cache network with cache size $M$, the memory rate pair $(M,R)$ is said to be achievable if there is a scheme with $R(M)\leq R$. For a such a  scheme, we have
\begin{IEEEeqnarray}{rl}
	\label{M}
	H(Z_{l})\leq& M\\
	\label{R}
	H(X_{\textbf{\textit{d}}})\leq& R\\
	\label{I.1}
	H(Z_{l},X_{\textbf{\textit{d}}})=&H(W_{d_{l}},Z_{l},X_{\textbf{\textit{d}}}),\\
	\label{I.2}
	H(W_{1},\dots,W_{N},Z_{l},X_{\textbf{\textit{d}}})=&H(W_{1},\dots,W_{N}),
\end{IEEEeqnarray}
where (\ref{M}) follows from the fact that size of each cache is $M$, (\ref{R}) follows from the fact that  for any demand in $\textbf{\textit{D}}$ the size of $X_{\textbf{\textit{d}}}$ is  at most $R(M)\leq R$,  (\ref{I.1})  follows from the fact that the file $W_{\textbf{\textit{d}}_{l}}$ can be computed from $X_{\textbf{\textit{d}}}$ and $Z_{l}$ by the user $U_{l}$, and (\ref{I.2}) follows from the fact that  $Z_{l}$ and $X_{\textbf{\textit{d}}}$ are functions of files $\{W_{1},\dots, W_{N}\}$. For a given cache size $M$, the smallest $R$ such that $(M, R)$ is achievable is called the exact rate memory tradeoff denoted by
\begin{equation}
R^{*}(M)=\min \{R: (M, R)\text{ is achievable}\}
\end{equation}

Maddah-Ali and Niesen in \cite{maddah2014fundamental} proposed a coding scheme with an uncoded placement phase and a coded delivery phase for the demands in $\textbf{\textit{D}}$ and demonstrated that the rate achieved by the proposed scheme is within a multiplicative gap of 12 from the optimal rate using cut set arguments. Several caching schemes were proposed in \cite{chen2014fundamental,amiri2016coded,yu2016exact,gomez2018fundamental,tian2016caching,vijith2019towards,kp2019fundamental,Shao_Gomez-Vukardebo_Zhang_Tian_2019,shao2020fundamental} to improve upon the rate achieved by the scheme proposed in \cite{maddah2014fundamental}. Despite several lower bounds on the achievable rates being proposed in \cite{ajaykrishnan2015critical,ghasemi2017improved,sengupta2015improved,  wang2018improved, yu2017characterizing}, the nature of the exact rate memory tradeoff is still elusive, except for the $(N, 2)$ cache network. 
The schemes proposed in \cite{maddah2014fundamental}, \cite{chen2014fundamental} provide a characterization of the exact rate memory tradeoff when $M\in\left[0,\frac{1}{K}\right]\cup  \left[\frac{N(K-1)}{K}, N\right]$.
For the special case of $N=K$, the schemes proposed in \cite{gomez2018fundamental}, \cite{kp2019fundamental} provide a characterization of the exact rate memory tradeoff when $M\in \left[\frac{1}{N}, \frac{1}{N(N-1)}\right]\cup \left[N-1-\frac{1}{N}, N-1 \right]$.
In a surprising result, Yu et al. \cite{yu2016exact} showed the existence of a universal scheme among caching schemes with an uncoded placement phase. These results are summarised in TABLE \ref{Table:Previous works}. In this paper, we consider the $(N, K)$ cache network where $N\leq K$  and $M\in \left[\frac{1}{K},\frac{N}{K}\right]$ and derive a set of new lower bounds for the demands in $\textbf{\textit{D}}$.
\begin{table*}[t]
	\centering
	\setlength{\tabcolsep}{2pt}
	\begin{tabular}{|c|c|c|c|}
		\hline
		 Caching Scheme& Cache Size $(M)$& Rate Memory Tradeoff&Condition\\
		\hline

		 Chen et al. \cite{chen2014fundamental}& $\big[0,\frac{1}{K}\big]$&$R^{*}(M)=N-NM$& $N\leq K$ \\
		\hline&&&\\[-1.2em]
		G{\'o}mez-Vilardeb{\'o} \cite{gomez2018fundamental}& $\big[\frac{1}{N},\frac{1}{N-1}\big]$&$R^{*}(M)=\dfrac{N^{2}-1}{N}-(N-1)M$&$N= K$\\[0.6em]
		\hline&&&\\[-1.2em]
		 Vijith et al. \cite{vijith2019towards}, \cite{kp2019fundamental} &$\left[N-1-\frac{1}{N},N-1\right]$&$R^{*}(M)=\frac{N+1}{N}-\frac{1}{N-1}M$ &$N=K$  \\[0.6em]
		\hline&&&\\[-1.2em]
		 $\begin{array}{c}
		\text{Maddah-Ali}\\ \text{\& Niesen \cite{maddah2014fundamental} }
		\end{array}$&$\left[\frac{N(K-1)}{K},N\right]$& $R^{*}(M)=1-\frac{1}{N}M$&-\\[0.6em]
		\hline&&&\\[-1.2em]
		 Yu et al. \cite{yu2016exact}&$[0,N]$& $\begin{array}{c}
		R(M)=R_{r}+(R_{r}-R_{r+1})\left(r-\frac{N}{K}M\right)\\
		\text{where }R(r)=\frac{{}^{K}\!C_{r+1}-{}^{K-N}\!C_{r+1}}{{}^{K}\!C_{r}}\\
		\text{and $r\in \{1,\dots, K\}$}
		\end{array}$ &$\begin{array}{c}
		\text{Optimal among}\\ \text{uncoded prefetching}\\ \text{schemes}
		\end{array}$\\[0.6em]
		
		\hline&&&\\[-1.2em]
		 This paper&$\left[\frac{1}{K},\frac{N}{K(N-1)}\right]$& $R^{*}(M)=\dfrac{KN-1}{K}-(N-1)M$ &$\big\lceil\frac{K+1}{2}\big\rceil\leq N \leq K$\\[0.6em]
		\hline
		
	\end{tabular}
\caption{Previous work in coded caching}
\label{Table:Previous works}
\end{table*}

The contributions of this paper are as follows:
\begin{itemize}
	\item For the case $\big\lceil\frac{K+1}{2}\big\rceil\leq N \leq K$, we derive a new lower bound and obtain a characterization of the exact rate memory tradeoff when $M\leq \frac{N}{K(N-1)}$. 
	\item For the case $1\leq N\leq \big\lceil\frac{K+1}{2}\big\rceil$, we derive a new lower bound which improves upon the previously known lower bounds.
\end{itemize}
Throughout this paper we use $[L]$ to represent the set $\{1,2,\dots,L\}$,  and $W_{[L]}$ to represent the set $\{W_{1}, W_{2},\dots, W_{L}\}$.

\section{Example networks}

In this section, we consider two examples to motivate the results we present in the paper. The $(3,4)$ network is an example for the case $\big\lceil\frac{K+1}{2}\big\rceil\leq N\leq K$ and the $(2,4)$ network is an example for the case $1\leq N\leq \big\lceil\frac{K+1}{2}\big\rceil$.

\subsection{Case I: The $(3,4)$ cache network}
\noindent Here, users $\{U_{1},U_{2},U_{3},U_{4}\}$ are connected to a server with three files $\{A,B,C\}$ (each of size $F$ bits). Each user $U_{k}$ has a cache $Z_{k}$ of size $MF$ bits. For a demand $\textbf{\textit{d}}$, we have:
\begin{lemma}
	\label{Theorem:(3,4)}
For the $(3,4)$ cache network, achievable memory rate pairs $(M,R)$ must satisfy the constraint
\begin{equation*}
8M+4R\geq 11
\end{equation*}
\end{lemma}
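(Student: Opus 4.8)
The plan is to obtain $8M+4R\ge 11$ as an explicit nonnegative combination of the structural identities \eqref{I.1}--\eqref{I.2}, the size bounds \eqref{M}--\eqref{R}, the symmetry relation \eqref{symmetry}, and the elementary (polymatroid) entropy inequalities, exactly as in standard coded-caching converses. First I would normalize $F=1$, so $H(A)=H(B)=H(C)=1$ and, by \eqref{I.2}, $H(A,B,C)=3$; note that every cache $Z_l$ and every transmission $X_{\mathbf d}$ is a deterministic function of $(A,B,C)$, hence any joint entropy built from these variables is at most $3$. Using \eqref{symmetry} I may fix convenient user labels, and by relabelling the files I restrict attention to a short list of ``all-files'' demands, for instance $\mathbf d_1=(A,B,C,A)$, $\mathbf d_2=(B,C,A,B)$, $\mathbf d_3=(C,A,B,C)$ together with one further demand that closes the cycle; the coefficients $8=2K$, $4=K$, $11=KN-1$ strongly suggest working with $K=4$ transmissions and each of the $K=4$ caches counted twice.

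The heart of the argument is a family of chained entropy estimates, one anchored at each chosen demand. For $\mathbf d_j$ I would grow a joint entropy one term at a time --- $H(X_{\mathbf d_j},Z_{i_1})$, then $H(X_{\mathbf d_j},Z_{i_1},Z_{i_2})$, and so on --- invoking \eqref{I.1} at each step to absorb the file that the newly added user can now decode, invoking \eqref{I.2} to recognise that a ``saturated'' term equals exactly $H(A,B,C)=3$, and invoking submodularity $H(\mathcal U)+H(\mathcal V)\ge H(\mathcal U\cup\mathcal V)+H(\mathcal U\cap\mathcal V)$ to split a single cache's entropy between two different transmissions $X_{\mathbf d_j},X_{\mathbf d_{j'}}$. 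The symmetry relation \eqref{symmetry} is then used to identify conditional-entropy terms of the same ``type'' that arise in different chains, so that they can be made to cancel when the chains are added.

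I would then add the per-demand inequalities with weights chosen so that: the four transmission terms aggregate to $4H(X_{\mathbf d})\le 4R$; the cache terms aggregate to $8H(Z_l)\le 8M$, i.e.\ each cache carries total weight $2$; the ``cross'' conditional-entropy terms cancel in pairs via \eqref{symmetry}; and the constants sum to $11$. The delicate point is precisely the last two items. A single one-shot expansion $H(X_{\mathbf d},Z_1,Z_2,Z_3)=3$ only reproduces the cut-set bound $R+3M\ge 3$ --- equivalently, it runs into the obstruction that every joint entropy of caches and transmissions is $\le 3$ --- and this is strictly weaker than $8M+4R\ge 11$ for $M>\tfrac14$. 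The extra strength must come from coupling at least two transmissions through caches they share, thereby manufacturing ``effective overlap'' among the caches; arranging these couplings so that the bookkeeping closes and the constants total $11=4\cdot 3-1$ rather than the $12$ a cut-set-type sum would give is the main obstacle. I expect this to need a short case analysis together with a small linear-programming-style verification, for which the achievable side is a useful guide: $8M+4R=11$ must hold with equality all along the G{\'o}mez-Vilardeb{\'o} segment $M\in[\tfrac14,\tfrac38]$, and the two endpoints $(\tfrac14,\tfrac94)$ and $(\tfrac38,2)$ pin down which decoding events and which cache-overlap inequalities have to be tight, which narrows the search for the correct combination.
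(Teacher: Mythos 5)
Your plan correctly identifies every ingredient the paper uses---submodularity, the decodability identity (\ref{I.1}), the function-of-files identity (\ref{I.2}), the symmetry relation (\ref{symmetry}), and a chained expansion anchored at several all-files demands---and it correctly diagnoses why a one-shot cut-set expansion cannot do better than $R+3M\geq 3$. But for a converse of this type the proof \emph{is} the explicit linear combination, and you never exhibit one: you defer the choice of demands, the weights, the order of the submodularity merges, and the placement of the symmetry step to ``a short case analysis together with a small linear-programming-style verification.'' That unexecuted search is exactly where the content of the lemma lies, so as written the argument has a genuine gap rather than just missing routine detail.

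For comparison, the combination that works in the paper starts from $8M+4R\geq 2H(Z_{1})+2H(Z_{2})+H(Z_{3})+3H(Z_{4})+2H(X_{(A,B,C,A)})+H(X_{(B,C,A,A)})+H(X_{(C,A,A,B)})$, i.e.\ non-uniform cache weights $(2,2,1,3)$ and weights $(2,1,1)$ on only three cyclic shifts of $(A,B,C,A)$; your guess of weight $2$ on every cache and weight $1$ on each of four demands is not the allocation actually used (though under the symmetry assumption the totals agree). The chain then forms the pairs $\{Z_{1},Z_{2}\}$ and $\{Z_{2},Z_{4}\}$ with $X_{(A,B,C,A)}$, $\{Z_{1},Z_{4}\}$ with $X_{(B,C,A,A)}$, and $\{Z_{3},Z_{4}\}$ with $X_{(C,A,A,B)}$, adjoins the files $A,B$ to each via (\ref{I.1}), and repeatedly trades caches between terms by submodularity until three terms saturate to $H(A,B,C)=3$ via (\ref{I.2}); the symmetry identity (\ref{symmetry}) is invoked exactly once, to replace $X_{(A,B,C,A)}$ by $X_{(A,A,B,C)}$ inside $H(A,B,X_{(A,B,C,A)})$ so that the leftover $Z_{4}$ can be paired with it to decode $C$. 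The constant then arises as $3H(A,B,C)+H(A,B)=9+2=11$: the ``missing'' unit relative to $12$ comes from the one residual term that saturates only to the two files $A,B$, not from a $KN-1$ count of demands. This is the specific bookkeeping your proposal would need to reproduce (or replace by an equivalent valid chain) before it constitutes a proof.
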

\begin{proof}

We have,
\begin{IEEEeqnarray*}{rll}
8M+4&R\overset{(a)}{\geq}  2H(Z_{1})+2H(Z_{2})+H(Z_{3})+3H(Z_{4})+2H(X_{(A,B,C,A)})+H(X_{(B,C,A,A)})+H(X_{(C,A,A,B)})\\
\overset{(b)}{\geq} & H(Z_{1},Z_{2},X_{(A,B,C,A)})+H(Z_{2},Z_{4},X_{(A,B,C,A)})+H(Z_{1},Z_{4},X_{(B,C,A,A)})+H(Z_{3},Z_{4},X_{(C,A,A,B)})\\
\overset{(c)}{=} & H(A,B,Z_{1},Z_{2},X_{(A,B,C,A)})+H(A,B,Z_{2},Z_{4},X_{(A,B,C,A)}) +H(A,B,Z_{1},Z_{4},X_{(B,C,A,A)})\\&+H(A,B,Z_{3},Z_{4},X_{(C,A,A,B)})\\
\overset{(b)}{\geq} &H(A,B,Z_{2},X_{(A,B,C,A)})+ H(A,B,Z_{1},Z_{2},Z_{4},X_{(A,B,C,A)}) +H(A,B,Z_{1},Z_{4},X_{(B,C,A,A)})\\&+H(A,B,Z_{3},Z_{4},X_{(C,A,A,B)})\\
\overset{}{\geq} & H(A,B,Z_{1},Z_{2},Z_{4}) +H(A,B,Z_{1},Z_{4},X_{(B,C,A,A)})+H(A,B,Z_{2},X_{(A,B,C,A)})\\&+H(A,B,Z_{3},Z_{4},X_{(C,A,A,B)})\\
\overset{(b)}{\geq} & H(A,B,Z_{1},Z_{2},Z_{4},X_{(B,C,A,A)})+H(A,B,Z_{1},Z_{4})+H(A,B,Z_{2},X_{(A,B,C,A)}) \\&+H(A,B,Z_{3},Z_{4},X_{(C,A,A,B)})\\
\overset{(c)}{=} & H(A,B,C,Z_{1},Z_{2},Z_{4},X_{(B,C,A,A)})+H(A,B,Z_{2},X_{(A,B,C,A)})+H(A,B,Z_{1},Z_{4}) \\&+H(A,B,Z_{3},Z_{4},X_{(C,A,A,B)})\\
\overset{(d)}{=} & H(A,B,C)+H(A,B,Z_{2},X_{(A,B,C,A)}) +H(A,B,Z_{1},Z_{4})+H(A,B,Z_{3},Z_{4},X_{(C,A,A,B)})\\
\overset{(b)}{\geq} & H(A,B,C)+H(A,B,Z_{2},X_{(A,B,C,A)}) +H(A,B,Z_{4})+H(A,B,Z_{1},Z_{3},Z_{4},X_{(C,A,A,B)})\\
\overset{(c)}{=} & H(A,B,C)+H(A,B,Z_{2},X_{(A,B,C,A)}) +H(A,B,Z_{4})+H(A,B,C,Z_{1},Z_{3},Z_{4},X_{(C,A,A,B)})\\
\overset{(d)}{=} & 2H(A,B,C)+H(A,B,Z_{2},X_{(A,B,C,A)})+H(A,B,Z_{4})\\
\overset{}{\geq} & 2H(A,B,C)+H(A,B,X_{(A,B,C,A)})+H(A,B,Z_{4})\\
\overset{(e)}{=} & 2H(A,B,C)+H(A,B,X_{(A,A,B,C)})+H(A,B,Z_{4})\\
\overset{(b)}{\geq} & 2H(A,B,C)+H(A,B)+H(A,B,Z_{4},X_{(A,A,B,C)})\\
\overset{(c)}{=} & 2H(A,B,C)+H(A,B)+H(A,B,C,Z_{4},X_{(A,A,B,C)})\\
\overset{(c)}{=} & 3H(A,B,C)+H(A,B)\geq 11,
\end{IEEEeqnarray*}
where 

\begin{tabular}{rl}
	$(a)$& follows from (\ref{R}) and (\ref{M}),\\ 
	$(b)$& follows from the submodularity property of entropy,\\
	$(c)$& follows from (\ref{I.1}), \\
	$(d)$& follows from (\ref{I.2}), \\
	$(e)$& follows from (\ref{symmetry}).
\end{tabular}

\end{proof}

 The above result improves upon the previous results from \cite{maddah2014fundamental,sengupta2015improved,gomez2018fundamental} and is summarised in TABLE \ref{Rate achieved for the $(3,4)$ cache network} and Fig. \ref{bound plot}.
\begin{table}[h]
	\centering
	\begin{tabular}{|c|c|c|c|}
		\hline
		Memory &Rate \cite{gomez2018fundamental}&Lower Bound\cite{maddah2014fundamental,sengupta2015improved}& New Lower Bound\\
		\hline&&&\\[-1.em]
		$\dfrac{1}{4}\leq M\leq \dfrac{3}{8}$&$\frac{11}{4}-2M$&$R\geq\max\left\{(3-3M),\left(\frac{8}{3}-2M\right)\right\}$ &$R\geq\frac{11}{4}-2M$\\[0.4em]
		\hline
	\end{tabular}
	\caption{Rate memory tradeoff for the $(3,4)$ cache network}
	\label{Rate achieved for the $(3,4)$ cache network}
\end{table}

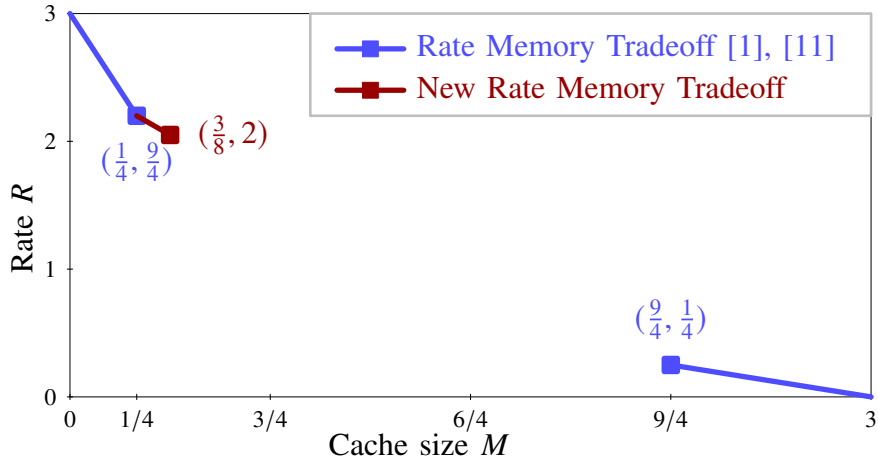
\begin{figure}[h]
	\centering
\begin{tikzpicture}[line cap=round,line join=round,x=3.55cm,y=1.7cm,
    spy/.style={%
        draw,green,
        line width=1pt,
        rectangle,inner sep=0pt,
    },
]

    \def\spyviewersize{2.4cm}

    \def\spyonclipreduce{0.8pt}

    \def\spyfactorI{50}
    \coordinate (spy-on 1) at (25/12,1/3);
    \coordinate (spy-in 1) at (2.3,1.5);

    \def\pic{\coordinate (O) at (0,0);
       \draw [ultra thin,step=3,black] (0,0) grid (3,3);
%
   \foreach \x in {0,1/4,3/4,6/4,9/4,3}
   \draw[shift={(\x,0)},color=black,thin] (0pt,1pt) -- (0pt,-1pt)
                                   node[below] {\footnotesize $\x$};
      \foreach \y in {0,1,2,3}
      \draw[shift={(0,\y)},color=black,thin] (1pt,0pt) -- (-1pt,0pt)
                                    node[left] {\footnotesize $\y$};
  \draw[color=black] (6cm,-18pt) node[left] { Cache size $M$};
  \draw[color=black] (-18pt,3.cm) node[left,rotate=90] { Rate $R$};
%
%

   \draw[smooth,blue!70,mark=otimes,samples=1000,domain=0.0:2.2,mark = $\otimes$,line width=2pt]
      {(0,3)--(1/4,2.2)  node[mark size=2.5pt,line width=2pt,label={below:$(\frac{1}{4},\frac{9}{4})$}]{$\pgfuseplotmark{square*}$} (9/4,1/4)node[mark size=2.5pt,line width=2pt,label={above:$(\frac{9}{4},\frac{1}{4})$}]{$\pgfuseplotmark{square*}$}--(3,0)};
      
      
      \draw[smooth,red!60!black,samples=1000,domain=0.0:2.2,mark size=2pt,mark =otimes*,line width=2pt]
      {(3/8,2.05)node[mark size=2.5pt,line width=2pt,label={right:$(\frac{3}{8},2)$}]{$\pgfuseplotmark{square*}$}--(1/4,2.2)};
      

    }

    \pic

\draw [gray!50!white,line width=1pt,fill=white] (0.9,2.2)rectangle (3,3);
\begin{scope}[shift={(0.75,2.4)}] 
\draw [smooth,samples=1000,domain=0.0:2.2,red!60!black,mark=otimes,line width=2pt] 
{(0.25,0) --node [mark size=3pt,line width=0.1pt]{$\pgfuseplotmark{square*}$} (0.5,0)}
node[right]{New Rate  Memory Tradeoff};

\draw [yshift=0.75\baselineskip,smooth,blue!70,samples=1000,domain=0.0:2.2,mark=otimes,line width=2pt] 
{(0.25,0) --node [mark size=3pt,line width=0.1pt]{$\pgfuseplotmark{square*}$} (0.5,0)}
node[right]{Rate Memory  Tradeoff \cite{maddah2014fundamental,chen2014fundamental}};
	\end{scope}

\end{tikzpicture}
\caption{Rate memory tradeoff for the $(3,4)$ cache network}
\label{bound plot}
\end{figure}

\subsection{Case II: The $(2,4)$ cache network}
\noindent Here, users $\{U_{1},U_{2},U_{3},U_{4}\}$ are connected to a server with files $\{A,B\}$ (each of size $F$ bits). Each user $U_{k}$ has cache $Z_{k}$ of size  $MF$ bits. For a demand $\textbf{\textit{d}}$, we have:

\begin{lemma}
	For the $(2,4)$ cache network, achievable memory rate pairs $(M,R)$ must satisfy the constraint
	\begin{equation}
	8M+6R\geq 11.
	\end{equation}
	\end{lemma}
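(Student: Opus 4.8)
The plan is to mimic the entropy-inequality chain used for the $(3,4)$ network in Lemma~\ref{Theorem:(3,4)}, but with $N=2$ files $\{A,B\}$ and $K=4$ users. The starting point will be to write $8M+6R$ as a weighted sum of cache entropies $H(Z_l)$ and broadcast entropies $H(X_{\textbf{\textit{d}}})$ over a carefully chosen collection of demands $\textbf{\textit{d}}\in\textbf{\textit{D}}$ — here the relevant demands are those in which both $A$ and $B$ appear, e.g. $(A,A,A,B)$, $(A,A,B,B)$, $(B,B,A,A)$ and their permutations. Using \eqref{M} and \eqref{R} to lower bound this weighted sum by a sum of joint entropies of the form $H(Z_{S},X_{\textbf{\textit{d}}})$, I would then repeatedly apply submodularity of entropy to ``merge'' caches across different terms, using \eqref{I.1} at each stage to absorb the requested files into the conditioning set (so that $H(Z_l,X_{\textbf{\textit{d}}})$ becomes $H(W_{d_l},Z_l,X_{\textbf{\textit{d}}})$), and \eqref{symmetry} whenever I need to relabel users to line up two terms for a submodularity step.

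The key structural move, just as in the $(3,4)$ case, is to build up a term that eventually contains all $N$ files $A,B$ together with some $Z_l$ and $X_{\textbf{\textit{d}}}$, at which point \eqref{I.2} collapses it to $H(A,B)$, contributing a full ``file block'' to the bound; doing this enough times produces a multiple of $H(A,B)$ plus a few leftover low-order entropy terms (such as $H(A,B,Z_l)$ or $H(A)$), and finally bounding those leftover terms below by $H(A,B)$ or $H(A)$ as appropriate and using $H(A,B)=2F$, $H(A)=F$ gives the constant $11$ (after normalising file size $F=1$). The coefficients $8$, $6$, $11$ suggest the chain should terminate with something like $5H(A,B)+H(A)\geq 11$ or $4H(A,B)+3H(A)\geq 11$, so part of the bookkeeping is to pick the initial weights on the $H(Z_l)$'s and $H(X_{\textbf{\textit{d}}})$'s so that exactly $6$ copies of $R$ and $8$ copies of $M$ are consumed and the residual entropy terms sum to at least $11$.

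The main obstacle I expect is the combinatorial search for the right initial linear combination of demands and weights: with only two files the symmetry group acting on demands is smaller and the ``merging'' steps have less room to manoeuvre than in the $N=3$ case, so I would need to choose the demand multiset and the cache-weight vector so that every submodularity step has a genuinely available pair of terms sharing the right intersection, and so that every application of \eqref{I.2} is licensed (i.e. the term really does contain all of $A$ and $B$). Concretely, I would first fix the list of demands and the multiplicities of each $H(Z_l)$, verify $\sum(\text{weight})\cdot H(Z_l)\le 8M$ and $\sum(\text{weight})\cdot H(X_{\textbf{\textit{d}}})\le 6R$, then attempt the merge-and-collapse chain, backtracking on the demand/weight choice if a step gets stuck — exactly the reverse-engineering that produced the $(3,4)$ proof. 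Once a working chain is found, the writeup is a routine sequence of $\overset{(a)}{\ge}$–$\overset{(e)}{=}$ steps justified by \eqref{R},\eqref{M}, submodularity, \eqref{I.1}, \eqref{I.2}, and \eqref{symmetry}, identical in format to the proof of Lemma~\ref{Theorem:(3,4)}.
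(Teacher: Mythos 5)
You have correctly identified the toolkit (the weighted sum of \eqref{M} and \eqref{R}, submodularity merges, absorption of requested files via \eqref{I.1}, collapse via \eqref{I.2}, and relabelling via \eqref{symmetry}) and even the correct endgame: the paper's chain does terminate in $5H(A,B)+H(A)\geq 11$. However, what you have written is a plan to find a proof, not a proof. The entire mathematical content of the lemma lies precisely in the part you defer to a ``combinatorial search'': the specific demand multiset, the specific cache weights, and the specific order of submodularity steps. The paper's choice is $8M+6R\geq 2H(Z_{1})+H(Z_{2})+2H(Z_{3})+3H(Z_{4})+3H(X_{(A,B,A,A)})+2H(X_{(B,A,A,A)})+H(X_{(A,A,B,A)})$, i.e.\ it uses only the cyclic shifts of $(A,B,A,A)$ --- demands in which exactly one user requests $B$ --- consistent with the Case~II construction ($\textbf{\textit{d}}_1=(W_1,W_2,W_1,W_1)$ for $N=2$, $K=4$) that this lemma is meant to illustrate. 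Your candidate list includes demands such as $(A,A,B,B)$ and $(B,B,A,A)$, which do not appear in the working decomposition, so the search space you describe has not been narrowed to the point where the chain can be executed.

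Concretely, the gap is that no verifiable sequence of inequalities is exhibited: one cannot check that ``every submodularity step has a genuinely available pair of terms sharing the right intersection'' without seeing the steps, and your own text acknowledges you would need to backtrack if a step gets stuck. The paper's proof resolves exactly this: after forming the six pairs $H(Z_l,X_{\textbf{\textit{d}}})$ it absorbs $A$ everywhere, merges $Z_1,Z_3,Z_4$ against $X_{(A,B,A,A)}$ and $Z_3,Z_4$ against $X_{(B,A,A,A)}$, uses \eqref{symmetry} to convert a leftover $H(A,X_{(A,B,A,A)})$ into $H(A,X_{(A,A,A,B)})$ so that it can later be paired with $Z_4$, and spends the two unpaired cache terms $H(Z_1),H(Z_2)$ on producing two copies of $H(A,B)$. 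None of these moves is forced by your outline, and without them the claim is not established. To complete the proposal you must either reproduce such a chain explicitly or invoke the general Theorem~\ref{Theorem:genera_case_2} with $N=2$, $K=4$ (which yields $8M+6R\geq 11$ directly) --- but the latter would invert the paper's expository order, since this lemma is presented as motivation for that theorem.
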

\begin{proof}

	We have,
	\begin{IEEEeqnarray*}{rl}
	8M&+6R\geq 2H(Z_{1})+H(Z_{2})+2H(Z_{3})+3H(Z_{4})+3H(X_{(A,B,A,A)})+2H(X_{(B,A,A,A)})+H(X_{(A,A,B,A)})\\
	\overset{(a)}{\geq}& H(Z_{1},X_{(A,B,A,A)})+H(Z_{3},X_{(A,B,A,A)})+H(Z_{4},X_{(A,B,A,A)})+H(Z_{3},X_{(B,A,A,A)})+H(Z_{4},X_{(B,A,A,A)})\\&+H(Z_{4},X_{(A,A,B,A)})+H(Z_{2})+H(Z_{1})\\
	\overset{(b)}{=}& H(A,Z_{1},X_{(A,B,A,A)})+H(A,Z_{3},X_{(A,B,A,A)})+H(A,Z_{4},X_{(A,B,A,A)})+H(A,Z_{3},X_{(B,A,A,A)})\\&+H(A,Z_{4},X_{(B,A,A,A)})+H(A,Z_{4},X_{(A,A,B,A)})+H(Z_{2})+H(Z_{1})\\
	\overset{(a)}{\geq}& H(A,Z_{1},Z_{3},Z_{4},X_{(A,B,A,A)})+2H(A,X_{(A,B,A,A)})+H(A,Z_{3},Z_{4},X_{(B,A,A,A)})+H(A,X_{(B,A,A,A)})\\&+H(A,Z_{4},X_{(A,A,B,A)})+H(Z_{2})+H(Z_{1})\\
	\overset{(c)}{=}& H(A,Z_{1},Z_{3},Z_{4},X_{(A,B,A,A)})+H(A,Z_{3},Z_{4},X_{(B,A,A,A)})+H(A,Z_{4},X_{(A,A,B,A)})+H(A,X_{(A,A,A,B)})\\&+H(A,X_{(A,B,A,A)})+H(A,X_{(B,A,A,A)})+H(Z_{2})+H(Z_{1})\\
	\overset{(a)}{\geq}& H(A,Z_{1},Z_{3},Z_{4},X_{(A,B,A,A)})+H(A,Z_{3},Z_{4},X_{(B,A,A,A)})+H(A,Z_{4},X_{(A,A,B,A)})+H(A,X_{(A,A,A,B)})\\&+H(A,Z_{2},X_{(A,B,A,A)})+H(A,Z_{1},X_{(B,A,A,A)})\\
	\overset{(b)}{=}& H(A,Z_{1},Z_{3},Z_{4},X_{(A,B,A,A)})+H(A,Z_{3},Z_{4},X_{(B,A,A,A)})+H(A,Z_{4},X_{(A,A,B,A)})+H(A,X_{(A,A,A,B)})\\&+H(A,B,Z_{2},X_{(A,B,A,A)})+H(A,B,Z_{1},X_{(B,A,A,A)})\\
	\overset{(d)}{=}& H(A,Z_{1},Z_{3},Z_{4},X_{(A,B,A,A)})+H(A,Z_{3},Z_{4},X_{(B,A,A,A)})+H(A,Z_{4},X_{(A,A,B,A)})+H(A,X_{(A,A,A,B)})\\&+2H(A,B)\\
	\overset{(a)}{\geq}& H(A,Z_{1},Z_{3},Z_{4},X_{(A,B,A,A)},X_{(B,A,A,A)})+H(A,Z_{3},Z_{4})+H(A,Z_{4},X_{(A,A,B,A)})+H(A,X_{(A,A,A,B)})\\&+2H(A,B)\\
	\overset{(b)}{=}& H(A,B,Z_{1},Z_{3},Z_{4},X_{(A,B,A,A)},X_{(B,A,A,A)})+H(A,Z_{3},Z_{4})+H(A,Z_{4},X_{(A,A,B,A)})+H(A,X_{(A,A,A,B)})\\&+2H(A,B)\\
	\overset{(d)}{=}& 3H(A,B)+H(A,Z_{3},Z_{4})+H(A,Z_{4},X_{(A,A,B,A)})+H(A,X_{(A,A,A,B)})\\
	\overset{(a)}{\geq}& 3H(A,B)+H(A,Z_{3},Z_{4},X_{(A,A,B,A)})+H(A,Z_{4})+H(A,X_{(A,A,A,B)})\\
	\overset{(b)}{=}& 3H(A, B)+H(A,B,Z_{3},Z_{4},X_{(A,A,B,A)})+H(A,Z_{4})+H(A,X_{(A,A,A,B)})\\
	\overset{(d)}{=}& 4H(A,B)+H(A,Z_{4})+H(A,X_{(A,A,A,B)})\\
	\overset{(a)}{\geq}& 4H(A,B)+H(A,Z_{4},X_{(A,A,A,B)})+H(A)\\
	\overset{(b)}{=}& 4H(A,B)+H(A,B,Z_{4},X_{(A,A,A,B)})+H(A)\\
	\overset{(d)}{=}& 5H(A,B)+H(A)\geq 11,
	\end{IEEEeqnarray*}
	where 
	
	\begin{tabular}{cl}
		$(a)$& follows from the submodularity property of entropy, \\
		$(b)$& follows from (\ref{I.1}),\\
		$(c)$& follows from (\ref{symmetry}), \\
		$(d)$& follows from (\ref{I.2}).
	\end{tabular}

\end{proof}
 The above result improves upon the previous results from \cite{maddah2014fundamental,sengupta2015improved,gomez2018fundamental} and is summarised in TABLE \ref{Rate achieved for the $(2,4)$ cache network} and Fig. \ref{(2,4)bound plot}.
\begin{table}[h]
	\centering
	\begin{tabular}{|c|c|c|c|}
		\hline
		Memory &Rate \cite{gomez2018fundamental}&Lower Bound\cite{maddah2014fundamental,sengupta2015improved}& New Lower Bound\\
		\hline&&&\\[-1.em]
		$\dfrac{1}{4}\leq M\leq \dfrac{1}{2}$&$\frac{32}{18}-\frac{10}{9}M $&$R\geq2-2M$ &$R\geq\frac{11}{6}-\frac{4}{3}M$\\[0.4em]
		\hline
	\end{tabular}
	\caption{Rate memory tradeoff for the $(2,4)$ cache network}
	\label{Rate achieved for the $(2,4)$ cache network}
\end{table}
\begin{figure}[h]
	\centering
	\begin{tikzpicture}[line cap=round,line join=round,x=5.45cm,y=2.8cm,
    spy/.style={%
        draw,green,
        line width=1pt,
        rectangle,inner sep=0pt,
    },
]

    \def\spyviewersize{2.4cm}

    \def\spyonclipreduce{0.8pt}

    \def\spyfactorI{50}
    \coordinate (spy-on 1) at (25/12,1/3);
    \coordinate (spy-in 1) at (2.3,1.5);

    \def\pic{\coordinate (O) at (0,0);
       \draw [ultra thin,step=2,black] (0,0) grid (2,2);
%
   \foreach \x in {0,1/4,1/2,1,3/2,2}
   \draw[shift={(\x,0)},color=black,thin] (0pt,1pt) -- (0pt,-1pt)
                                   node[below] {\footnotesize $\x$};
      \foreach \y in {0,1,2}
      \draw[shift={(0,\y)},color=black,thin] (1pt,0pt) -- (-1pt,0pt)
                                    node[left] {\footnotesize $\y$};
  \draw[color=black] (6cm,-18pt) node[left] { Cache size $M$};
  \draw[color=black] (-18pt,3.cm) node[left,rotate=90] { Rate $R$};
%

   \draw[smooth,blue!70,mark=otimes,samples=1000,domain=0.0:2.2,mark = $\otimes$,line width=2pt]
      {(0,2)--(1/4,3/2)  node[mark size=2.5pt,line width=2pt,label={above:$(\frac{1}{4},\frac{3}{2})$}]{$\pgfuseplotmark{square*}$} (3/2,1/4)node[mark size=2.5pt,line width=2pt,label={above:$(\frac{3}{2},\frac{1}{4})$}]{$\pgfuseplotmark{square*}$}--(2,0)};
      \draw[smooth,purple!70!black,mark=otimes,samples=1000,domain=0.0:2.2,mark = $\otimes$,dash pattern=on 6pt off 6pt,line width=2pt]
      {(1/4,3/2)--(1/2,1)  node[mark size=2.5pt,line width=2pt,label={below:$(\frac{1}{2},1)$}]{$\pgfuseplotmark{square*}$} };
      
      \draw[smooth,green!60!black,samples=1000,domain=0.0:2.2,mark size=2pt,mark =triangle*,dash pattern=on 6pt off 6pt,line width=2pt]
      {(1/2,7/6)node[mark size=2pt,line width=2pt,label={right:$(\frac{1}{2},\frac{7}{6})$}]{$\pgfuseplotmark{square*}$}--(1/4,3/2)};
      \draw[smooth,black,samples=1000,domain=0.0:2.2,mark size=2pt,mark =triangle*,dash pattern=on 6pt off 6pt,line width=2pt]
      {(1/2,11/9)node[mark size=2pt,line width=2pt,label={above:$(\frac{1}{2},\frac{11}{9})$}]{$\pgfuseplotmark{square*}$}--(1/4,3/2)};

    }

    \pic

\draw [gray!50!white,line width=1pt,fill=white] (0.6,1.35)rectangle (2,2.15);
\begin{scope}[shift={(0.44,1.45)}] 
		\draw [smooth,samples=1000,domain=0.0:2.2,green!60!black,mark=otimes,dash pattern=on 6pt off 6pt,line width=2pt] 
	{(0.25,0) --node [mark size=2.5pt,line width=0.1pt]{$\pgfuseplotmark{square*}$} (0.5,0)}
	node[right]{New Lower Bound};
	\draw[yshift=0.75\baselineskip,smooth,purple!70!black,mark=otimes,samples=1000,domain=0.0:2.2,mark = $\otimes$,dash pattern=on 6pt off 6pt,line width=2pt]
	{(0.25,0) --node [mark size=2.5pt,line width=0.1pt]{$\pgfuseplotmark{square*}$} (0.5,0)}
	node[right]{Known Lower Bound \cite{maddah2014fundamental,sengupta2015improved}};

		\draw [yshift=2.25\baselineskip,smooth,blue!70,samples=1000,domain=0.0:2.2,mark=otimes,line width=2pt] 
				{(0.25,0) --node [mark size=2.5pt,line width=0.1pt]{$\pgfuseplotmark{square*}$} (0.5,0)}
				node[right]{Rate Memory Tradeoff \cite{maddah2014fundamental,chen2014fundamental}};
		
		\draw [yshift=1.5\baselineskip,smooth,samples=1000,domain=0.0:2.2,black,mark=otimes,dash pattern=on 6pt off 6pt,line width=2pt] 
		{(0.25,0) --node [mark size=2.5pt,line width=0.1pt]{$\pgfuseplotmark{square*}$} (0.5,0)}
		node[right]{Known Achievable  Rate \cite{tian2016symmetry}};
	\end{scope}

\end{tikzpicture}
	\caption{Rate memory tradeoff for the $(2,4)$ cache network}
	\label{(2,4)bound plot}
\end{figure}
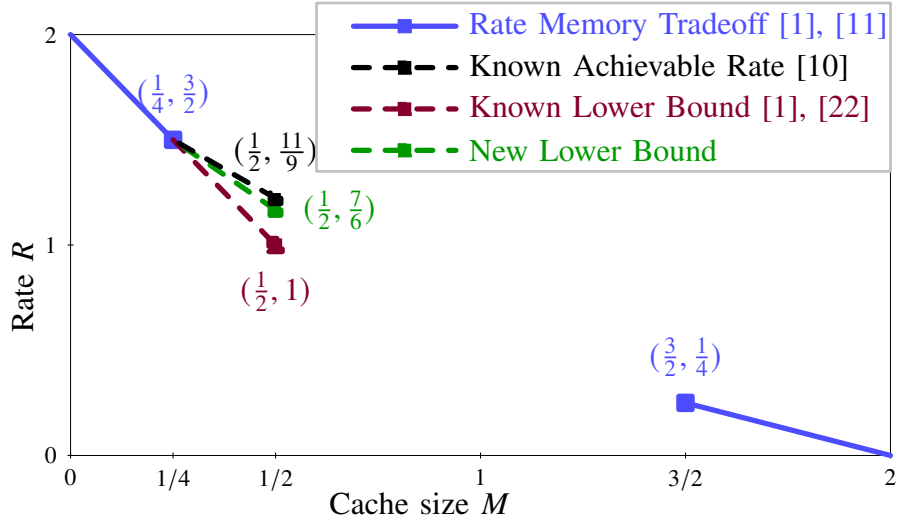
\begin{remark}
	It should be noted that, for the $(2,4)$ cache network, the bound $8M+6R\geq 11$ is already mentioned in \cite{tian2016symmetry}. We present the proof above, which can be extended to the $(N, K)$ cache network. 
\end{remark}

\section{New lower bounds}
In this section, we derive new lower bounds on the rate memory tradeoff  for the $(N,K)$ cache network where $N\leq K$ and cache size $M\in \left[\frac{1}{K},\frac{N}{K}\right]$. The key ideas we employ are identities (\ref{I.1}), (\ref{I.2}) and the properties of symmetric caching schemes stated in (\ref{symmetry}). As in Section II, we consider two cases, namely $\big\lceil\frac{K+1}{2}\big\rceil \leq N \leq K$ and $1\leq N \leq \big\lceil\frac{K+1}{2}\big\rceil$.

\subsection{Case I: $\big\lceil\frac{K+1}{2}\big\rceil \leq N \leq K$}
\noindent Consider the demand
\begin{equation}
\textbf{\textit{d}}_{1}=(W_{1},W_{2},\dots, W_{N}, W_{1}, W_{2},\dots, W_{K-N})
\end{equation}
Demands $\{\textbf{\textbf{\textit{d}}}_{l}:2\leq l\leq K\}$, are obtained from the demand $\textbf{\textit{d}}_{1}$ by cyclic left shifts as shown in TABLE \ref{demands}.
\begin{table}[!h]
	\centering
	\setlength{\tabcolsep}{2pt}
	\begin{tabular}{r|c|c|c|>{\columncolor{gray!25!white}}c|c|c|c|c|>{\columncolor{gray!25!white}}c|c|c|}
		\hhline{~-----------}   
		&Users &$\textbf{\textit{d}}_{1}$&\dots&$\textbf{\textit{d}}_{i}$&\dots&$\textbf{\textit{d}}_{N}$&$\textbf{\textit{d}}_{N+1}$&\dots&$\textbf{\textit{d}}_{N+i}$&\dots&$\textbf{\textit{d}}_{K}$\\ 
		\hhline{~-----------}   
		\ldelim\{{9}{5mm}[\parbox{3.5mm}{$ {\textbf{\textit{A}}_{i}}$}]&$ {U_{1}}$&$ {W_{1}}$& {\dots}& {$W_{i}$}& {\dots}& {$W_{N}$}& {$W_{1}$}& {\dots}& {$W_{i}$}& {\dots}& {$W_{K-N}$}\\  
		\hhline{~-----------}   
		& {$U_{2}$}& {$W_{2}$}& {\dots}& {$W_{i+1}$}& {\dots}& {$W_{1}$}& {$W_{2}$}& {\dots}& {$W_{i+1}$}& {\dots}& {$W_{1}$}\\  
		\hhline{~-----------}   
		& {\dots}& {\dots}& {\dots}& {\dots}& {\dots}& {\dots}& {\dots}& {\dots}& {\dots}& {\dots}& {\dots}\\

		\hhline{~-----------}    & {$U_{\substack{K-N-i+1}}$}& {$W_{\substack{K-N-i+1}}$}& {\dots}& {$W_{K-N}$}& {\dots}& {$W_{\substack{K-N-i}}$}& {$W_{\substack{K-N-i+1}}$}& {\dots}& {$W_{K-N}$}& {\dots}& {$W_{\substack{K-N-i}}$}\\  
		\hhline{~-----------}   
			\ldelim\{{5}{11mm}[\parbox{3.5mm}{ {${\textbf{\textit{C}}_{i}}$}}]& {$U_{\substack{K-N-i+2}}$}& {$W_{\substack{K-N-i+2}}$}& {\dots}& {$W_{K-N+1}$}& {\dots}& {$W_{\substack{K-N-i+1}}$}& {$W_{\substack{K-N-i+2}}$}& {\dots}& {$W_{1}$}& {\dots}& {$W_{\substack{K-N-i-1}}$}\\  
		\hhline{~-----------}   
		& {\dots}& {\dots}& {\dots}& {\dots}& {\dots}& {\dots}& {\dots}& {\dots}& {\dots}& {\dots}& {\dots}\\ 
		\hhline{~-----------}    & {$U_{\substack{K-N}}$}& {$W_{\substack{K-N}}$}& {\dots}& {$W_{\substack{K-N+i-1}}$}& {\dots}& {$W_{\substack{K-N-1}}$}& {$W_{\substack{K-N}}$}& {\dots}& {$W_{i-1}$}& {\dots}& {$W_{\substack{K-N-1}}$}\\  \hhline{~-----------}   
		& {\dots}& {\dots}& {\dots}& {\dots}& {\dots}& {\dots}& {\dots}& {\dots}& {\dots}& {\dots}& {\dots}\\ 
		\hhline{~-----------}    & {$U_{\substack{N-i}}$}& {$W_{\substack{N-i}}$}& {\dots}& {$W_{N-1}$}& {\dots}& {$W_{\substack{2N-K-i-1}}$}& {$W_{\substack{2N-K-i}}$}& {\dots}& {$W_{2N-K-1}$}& {\dots}& {$W_{\substack{N-i-1}}$}\\ 
		\hhline{~-----------} &$U_{\substack{N-i+1}}$&$W_{\substack{N-i+1}}$&\dots&\cellcolor{LimeGreen!50!white}{$W_{N}$}&\dots&$W_{\substack{2N-K-i}}$&$W_{\substack{2N-K-i+1}}$&\dots&$W_{2N-K}$&\dots&$W_{\substack{N-i}}$\\  \hhline{~-----------}   
		&\dots&\dots&\dots&\dots&\dots&\dots&\dots&\dots&\dots&\dots&\dots\\ 
		
		\hhline{~-----------}    &$U_{\substack{N}}$&$W_{\substack{N}}$&\dots&$W_{i-1}$&\dots&$W_{\substack{2N-K}}$&$W_{\substack{2N-K+1}}$&\dots&$W_{\substack{2N-K+i}}$&\dots&$W_{\substack{N-1}}$\\  \hhline{~-----------}   
		\ldelim\{{6}{5mm}[\parbox{3.5mm}{{ {${\textbf{\textit{E}}}$}}}]& {$U_{\substack{N+1}}$}& {$W_{\substack{1}}$}& {\dots}& {$W_{i}$}& {\dots}& {$W_{\substack{2N-K+1}}$}& {$W_{\substack{2N-K+2}}$}& {\dots}& {$W_{\substack{2N-K+i+1}}$}& {\dots}& {$W_{\substack{N}}$}\\  \hhline{~-----------}   
		& {\dots}& {\dots}& {\dots}& {\dots}& {\dots}& {\dots}& {\dots}& {\dots}& {\dots}& {\dots}& {\dots}\\ 
	
		\hhline{~-----------} & {$U_{\substack{K-i+1}}$}& {$W_{\substack{K-N-i+1}}$}& {\dots}& {$W_{K-N}$}& {\dots}& {$W_{\substack{N-i}}$}& {$W_{\substack{N-i+1}}$}& {\dots}&\cellcolor{LimeGreen!50!white} {$W_{\substack{N}}$}& {\dots}& {$W_{\substack{K-N-i}}$}\\  
		\hhline{~-----------}   
		\!\!\ldelim\{{3}{11mm}[\parbox{3.5mm}{ {${\textbf{\textit{B}}_{i}}$}}]& {$U_{\substack{K-i+2}}$}& {$W_{\substack{K-N-i+2}}$}& {\dots}& {$W_{1}$}& {\dots}& {$W_{\substack{N-i+1}}$}& {$W_{\substack{N-i+2}}$}& {\dots}& {$W_{\substack{1}}$}& {\dots}& {$W_{\substack{K-N-i+1}}$}\\  
		\hhline{~-----------}   
		& {\dots}& {\dots}& {\dots}& {\dots}& {\dots}& {\dots}& {\dots}& {\dots}& {\dots}& {\dots}& {\dots}\\ 
		
		\hhline{~-----------}    & {$U_{K}$}& {$W_{\substack{K-N}}$}& {\dots}& {$W_{i-1}$}& {\dots}& {$W_{\substack{N-1}}$}& {$W_{\substack{N}}$}& {\dots}& {$W_{\substack{i-1}}$}& {\dots}& {$W_{\substack{K-N-1}}$}\\  
		\hhline{~-----------}
	\end{tabular}
	\caption{The set of demands $\{\textbf{\textit{d}}_{l}: 1\leq l\leq K\}$}
	\label{demands}
\end{table}
\noindent For the demand  $\textbf{\textit{d}}_{l}$, let $X_{\textbf{\textit{d}}_{l}}$ denote the set of packets broadcast by the server. Consider the user index $\overline{l}$ defined as
\begin{equation}
\label{case 1 ZWN}
\overline{l}=\left\{\begin{aligned}
{N+1-l}, &\text{ for $1\leq l\leq N$}\\
{K+N+1-l}, &\text{ for $N+1\leq l\leq K$}
\end{aligned}\right.
\end{equation}
It can be noted that in demand $\textbf{\textit{d}}_{l}$, the user $U_{\overline{l}}$ requires the file $W_{N}$. For $\textbf{\textit{S}}\subseteq \{{U}_{1},\dots,{U}_{K}\}$, let $Z_{\textbf{\textit{S}}}$ denote the cache contents of all the users in set ${\textbf{\textit{S}}}$.

The following lemma are easy to obtain:
\begin{lemma}
	\label{Lemma repeated}
	 For $\textbf{\textit{S}},\textbf{\textit{T}}\subset\{U_{1},\dots, U_{K}\}\setminus\{U_{\overline{l}}\}$, we have the identity
	\begin{align*}
	H(W_{[N-1]},Z_{\textbf{\textit{S}}},Z_{\overline{l}})+H(&W_{[N-1]},Z_{\textbf{\textit{T}}},X_{\textbf{\textit{d}}_{l}})\geq H(W_{[N-1]},Z_{\textbf{\textit{S}}\cap \textbf{\textit{T}}})+N,
	\end{align*}
\end{lemma}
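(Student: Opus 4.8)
The plan is to mimic the chain of entropy manipulations already exhibited in the proofs of Lemma~\ref{Theorem:(3,4)} and the $(2,4)$ lemma, but abstracted to arbitrary sets $\textbf{\textit{S}},\textbf{\textit{T}}$ not containing $U_{\overline l}$. First I would recall the ingredients available: the submodularity of entropy, the decoding identity (\ref{I.1}), and the completion identity (\ref{I.2}). The key observation is that since $U_{\overline l}$ requests $W_N$ in demand $\textbf{\textit{d}}_l$, the pair $(Z_{\overline l}, X_{\textbf{\textit{d}}_l})$ allows decoding of $W_N$; combined with $W_{[N-1]}$ this yields all $N$ files. So the rough idea is: starting from the left-hand side, first apply submodularity to $H(W_{[N-1]},Z_{\textbf{\textit{S}}},Z_{\overline l})$ and $H(W_{[N-1]},Z_{\textbf{\textit{T}}},X_{\textbf{\textit{d}}_l})$ to split off the common part $Z_{\textbf{\textit{S}}\cap\textbf{\textit{T}}}$ and pull the remaining caches together, producing $H(W_{[N-1]},Z_{\textbf{\textit{S}}\cap\textbf{\textit{T}}}) + H(W_{[N-1]},Z_{\textbf{\textit{S}}\cup\textbf{\textit{T}}},Z_{\overline l},X_{\textbf{\textit{d}}_l})$ (being careful about which way submodularity is oriented so the union term is the one that grows). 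Then on the union term I would use (\ref{I.1}) to adjoin $W_N$ and (\ref{I.2}) to conclude that $H(W_{[N-1]},Z_{\textbf{\textit{S}}\cup\textbf{\textit{T}}},Z_{\overline l},X_{\textbf{\textit{d}}_l}) = H(W_{[N]})= N$ (taking file size normalized to $1$), which gives exactly the claimed inequality.

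More concretely, I would write the submodularity step as follows. For sets $P\supseteq Q$ of random variables and any $V$, submodularity gives $H(P) + H(Q,V) \ge H(Q) + H(P,V)$. Apply this with $P = (W_{[N-1]},Z_{\textbf{\textit{S}}},Z_{\overline l})$, $Q = (W_{[N-1]}, Z_{\textbf{\textit{S}}\cap\textbf{\textit{T}}})$, $V = (Z_{\textbf{\textit{T}}}, X_{\textbf{\textit{d}}_l})$. This needs $Q\subseteq P$, which holds because $\textbf{\textit{S}}\cap\textbf{\textit{T}}\subseteq\textbf{\textit{S}}$; and then $Q\cup V$ contains $W_{[N-1]}, Z_{\textbf{\textit{T}}}, X_{\textbf{\textit{d}}_l}$ so $H(Q,V)\le H(W_{[N-1]},Z_{\textbf{\textit{T}}},X_{\textbf{\textit{d}}_l})$ — wait, I need equality or the right direction here, so instead I would take $V$ to be exactly the "extra" variables of $\textbf{\textit{T}}$ relative to $\textbf{\textit{S}}\cap\textbf{\textit{T}}$ together with $X_{\textbf{\textit{d}}_l}$, i.e. $V = (Z_{\textbf{\textit{T}}\setminus\textbf{\textit{S}}}, X_{\textbf{\textit{d}}_l})$, so that $H(Q,V) = H(W_{[N-1]},Z_{\textbf{\textit{T}}},X_{\textbf{\textit{d}}_l})$ and $H(P,V) = H(W_{[N-1]}, Z_{\textbf{\textit{S}}\cup\textbf{\textit{T}}}, Z_{\overline l}, X_{\textbf{\textit{d}}_l})$. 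Then submodularity yields
\[
H(W_{[N-1]},Z_{\textbf{\textit{S}}},Z_{\overline l}) + H(W_{[N-1]},Z_{\textbf{\textit{T}}},X_{\textbf{\textit{d}}_l}) \ge H(W_{[N-1]},Z_{\textbf{\textit{S}}\cap\textbf{\textit{T}}}) + H(W_{[N-1]},Z_{\textbf{\textit{S}}\cup\textbf{\textit{T}}},Z_{\overline l},X_{\textbf{\textit{d}}_l}).
\]

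For the second term, I would use that $U_{\overline l}$ decodes $W_N$ from $Z_{\overline l}$ and $X_{\textbf{\textit{d}}_l}$: by (\ref{I.1}), $H(Z_{\overline l},X_{\textbf{\textit{d}}_l}) = H(W_N, Z_{\overline l}, X_{\textbf{\textit{d}}_l})$, and this adjunction propagates — conditioning on the same extra variables $W_{[N-1]}, Z_{\textbf{\textit{S}}\cup\textbf{\textit{T}}}$ preserves the identity — so $H(W_{[N-1]},Z_{\textbf{\textit{S}}\cup\textbf{\textit{T}}},Z_{\overline l},X_{\textbf{\textit{d}}_l}) = H(W_{[N]},Z_{\textbf{\textit{S}}\cup\textbf{\textit{T}}},Z_{\overline l},X_{\textbf{\textit{d}}_l})$. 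Then (\ref{I.2}) gives $H(W_{[N]},Z_{\textbf{\textit{S}}\cup\textbf{\textit{T}}},Z_{\overline l},X_{\textbf{\textit{d}}_l}) = H(W_{[N]}) = N$ under the normalization $H(W_i)=1$. Assembling these gives the lemma. The main obstacle — really a bookkeeping subtlety rather than a deep difficulty — is getting the orientation of submodularity exactly right so the growing term collects all of $Z_{\textbf{\textit{S}}\cup\textbf{\textit{T}}}$, $Z_{\overline l}$ and $X_{\textbf{\textit{d}}_l}$ while the shrinking term is precisely $H(W_{[N-1]}, Z_{\textbf{\textit{S}}\cap\textbf{\textit{T}}})$; the choice of $V = (Z_{\textbf{\textit{T}}\setminus\textbf{\textit{S}}}, X_{\textbf{\textit{d}}_l})$ above is designed to make this automatic, and one should double-check that $U_{\overline l}\notin\textbf{\textit{S}}\cup\textbf{\textit{T}}$ so that writing $Z_{\overline l}$ separately is meaningful and no cache is double-counted. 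A one-line conditional-entropy argument justifies the claim that the decoding identity (\ref{I.1}) persists after adjoining arbitrary side information.
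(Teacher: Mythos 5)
Your proposal is correct and follows essentially the same route as the paper: one submodularity step that splits the two terms into $H(W_{[N-1]},Z_{\textbf{\textit{S}}\cap \textbf{\textit{T}}})+H(W_{[N-1]},Z_{\textbf{\textit{S}}\cup \textbf{\textit{T}}},Z_{\overline{l}},X_{\textbf{\textit{d}}_{l}})$, followed by (\ref{I.1}) to adjoin $W_N$ (since $U_{\overline{l}}$ decodes $W_N$ from $Z_{\overline{l}}$ and $X_{\textbf{\textit{d}}_{l}}$) and (\ref{I.2}) to collapse the second term to $H(W_{[N]})=N$. Your explicit choice of $V=(Z_{\textbf{\textit{T}}\setminus\textbf{\textit{S}}},X_{\textbf{\textit{d}}_{l}})$ is just a more detailed bookkeeping of the same submodularity application the paper invokes in its step $(a)$.
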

\begin{proof}
	We have,
	\begin{align*}
	H(W_{[N-1]},Z_{\textbf{\textit{S}}},Z_{\overline{l}})+H(W_{[N-1]},Z_{\textbf{\textit{T}}},X_{\textbf{\textit{d}}_{l}})\overset{(a)}{\geq} &H(W_{[N-1]},Z_{\textbf{\textit{S}}\cap \textbf{\textit{T}}})+H(W_{[N-1]},Z_{\textbf{\textit{S}}\cup \textbf{\textit{T}}},Z_{\overline{l}},X_{\textbf{\textit{d}}_{l}})\\
	\overset{(b)}{=}&H(W_{[N-1]},Z_{\textbf{\textit{S}}\cap \textbf{\textit{T}}})+H(W_{[N-1]},W_{N},Z_{\textbf{\textit{S}}\cup \textbf{\textit{T}}},Z_{\overline{l}},X_{\textbf{\textit{d}}_{l}})\\
	\overset{(c)}{=}&H(W_{[N-1]},Z_{\textbf{\textit{S}}\cap \textbf{\textit{T}}})+H(W_{[N]})\\=&H(W_{[N-1]},Z_{\textbf{\textit{S}}\cap \textbf{\textit{T}}})+N
	\end{align*}
	where 
	
	\begin{tabular}{cl}
		$(a)$& follows from the submodularity property of entropy,\\
		$(b)$& follows from (\ref{I.1}),\\
		$(c)$& follows from (\ref{I.2}).
	\end{tabular}

\end{proof}
\begin{lemma}
	\label{Lemma sum increment}
	For a sequence of sets $\textbf{\textit{S}}_{i}\subset\{U_{1},\dots,U_{K}\}\setminus\{U_{\overline{i}}\}$, such that $\textbf{\textit{S}}_{i}=\textbf{\textit{S}}_{i+1}\cup\{U_{\overline{i+1}}\}$, we have
	\begin{align*}
	H(W_{[N-1]},Z_{\textbf{\textit{S}}_{l}})+\sum_{i=l+1}^{j}H(W_{[N-1]},Z_{\textbf{\textit{S}}_{i}},X_{\textbf{\textit{d}}_{i}})\geq& (j-l)N+H(W_{[N-1]},Z_{\textbf{\textit{S}}_{j}})
	\end{align*}
\end{lemma}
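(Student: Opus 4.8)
The plan is to prove Lemma~\ref{Lemma sum increment} by induction on $j$, using Lemma~\ref{Lemma repeated} as the single inductive step. The base case $j=l$ is trivial: the sum on the left is empty and the inequality reads $H(W_{[N-1]},Z_{\textbf{\textit{S}}_{l}})\geq H(W_{[N-1]},Z_{\textbf{\textit{S}}_{l}})$.

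For the inductive step, assume the claim holds for $j-1$, i.e.
\begin{align*}
H(W_{[N-1]},Z_{\textbf{\textit{S}}_{l}})+\sum_{i=l+1}^{j-1}H(W_{[N-1]},Z_{\textbf{\textit{S}}_{i}},X_{\textbf{\textit{d}}_{i}})\geq (j-1-l)N+H(W_{[N-1]},Z_{\textbf{\textit{S}}_{j-1}}).
\end{align*}
Adding the term $H(W_{[N-1]},Z_{\textbf{\textit{S}}_{j}},X_{\textbf{\textit{d}}_{j}})$ to both sides, it suffices to show
\begin{align*}
H(W_{[N-1]},Z_{\textbf{\textit{S}}_{j-1}})+H(W_{[N-1]},Z_{\textbf{\textit{S}}_{j}},X_{\textbf{\textit{d}}_{j}})\geq N+H(W_{[N-1]},Z_{\textbf{\textit{S}}_{j}}).
\end{align*}
This is exactly where Lemma~\ref{Lemma repeated} is applied, with demand index $\textbf{\textit{d}}_{j}$ (so the distinguished user is $U_{\overline{j}}$), and with the choices $\textbf{\textit{S}}=\textbf{\textit{S}}_{j}$ and $\textbf{\textit{T}}=\textbf{\textit{S}}_{j}$: since $\textbf{\textit{S}}_{j-1}=\textbf{\textit{S}}_{j}\cup\{U_{\overline{j}}\}$ by hypothesis, the first entropy term $H(W_{[N-1]},Z_{\textbf{\textit{S}}_{j-1}})$ equals $H(W_{[N-1]},Z_{\textbf{\textit{S}}_{j}},Z_{\overline{j}})$, matching the left side of Lemma~\ref{Lemma repeated}, and $\textbf{\textit{S}}\cap\textbf{\textit{T}}=\textbf{\textit{S}}_{j}$ gives the right side. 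One must also check the hypothesis of Lemma~\ref{Lemma repeated} is met, namely $\textbf{\textit{S}}_{j},\textbf{\textit{S}}_{j}\subset\{U_{1},\dots,U_{K}\}\setminus\{U_{\overline{j}}\}$, which holds by the standing assumption $\textbf{\textit{S}}_{i}\subset\{U_{1},\dots,U_{K}\}\setminus\{U_{\overline{i}}\}$ taken at $i=j$. Chaining this with the inductive hypothesis yields the bound for $j$ and completes the induction.

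The only subtlety — and the place to be careful rather than a genuine obstacle — is bookkeeping of the index set conditions: one needs $\textbf{\textit{S}}_{j}$ not to contain $U_{\overline{j}}$ so that Lemma~\ref{Lemma repeated} applies, and the telescoping relation $\textbf{\textit{S}}_{i}=\textbf{\textit{S}}_{i+1}\cup\{U_{\overline{i+1}}\}$ must be used at the correct shift so that the $Z_{\overline{\cdot}}$ absorbed at stage $j$ is indeed the one requiring $W_{N}$ under $\textbf{\textit{d}}_{j}$. Both are immediate from the statement's hypotheses, so no new ideas are needed beyond a clean induction.
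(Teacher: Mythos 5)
Your proof is correct and is essentially the paper's argument: the paper unrolls the same recursion explicitly, at each stage rewriting $H(W_{[N-1]},Z_{\textbf{\textit{S}}_{i}})$ as $H(W_{[N-1]},Z_{\textbf{\textit{S}}_{i+1}},Z_{\overline{i+1}})$ and applying Lemma~\ref{Lemma repeated} with $\textbf{\textit{S}}=\textbf{\textit{T}}$ to gain $N$ per step, which is exactly your inductive step. Packaging it as a formal induction on $j$ is just a cleaner presentation of the identical idea.
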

\begin{proof}
	We have,
	\begin{align*}
	H(W_{[N-1]},Z_{\textbf{\textit{S}}_{l}})+&\sum_{i=l+1}^{j}H(W_{[N-1]},Z_{\textbf{\textit{S}}_{i}},X_{\textbf{\textit{d}}_{i}})\\=&H(W_{[N-1]},Z_{\textbf{\textit{S}}_{l}}) +H(W_{[N-1]},Z_{\textbf{\textit{S}}_{l+1}},X_{\textbf{\textit{d}}_{l+1}})+\sum_{i=l+2}^{j}H(W_{[N-1]},Z_{\textbf{\textit{S}}_{i}},X_{\textbf{\textit{d}}_{i}})\\
	\overset{(a)}{=}&\Big[H(W_{[N-1]},Z_{\textbf{\textit{S}}_{l+1}},Z_{\overline{l+1}})+H(W_{[N-1]},Z_{\textbf{\textit{S}}_{l+1}},X_{\textbf{\textit{d}}_{l+1}})\Big] +\sum_{i=l+2}^{j}H(W_{[N-1]},Z_{\textbf{\textit{S}}_{i}},X_{\textbf{\textit{d}}_{i}})\\
	\overset{(b)}{\geq}&N+\Big[H(W_{[N-1]},Z_{\textbf{\textit{S}}_{l+1}})+H(W_{[N-1]},Z_{\textbf{\textit{S}}_{l+2}},X_{\textbf{\textit{d}}_{l+2}})\Big] +\sum_{i=l+3}^{j}H(W_{[N-1]},Z_{\textbf{\textit{S}}_{i}},X_{\textbf{\textit{d}}_{i}})\\
	\overset{(c)}{\geq}&2N+H(W_{[N-1]},Z_{\textbf{\textit{S}}_{l+2}})+H(W_{[N-1]},Z_{\textbf{\textit{S}}_{l+3}},X_{\textbf{\textit{d}}_{l+3}}) +\sum_{i=l+4}^{j}H(W_{[N-1]},Z_{\textbf{\textit{S}}_{i}},X_{\textbf{\textit{d}}_{i}})\\
	\overset{(d)}{\geq}&(j-l)N+H(W_{[N-1]},Z_{\textbf{\textit{S}}_{j}})
	\end{align*}
	where 
	
	\begin{tabular}{cl}
		$(a)$& follows from definition of set $\textbf{\textit{S}}_{i}$,\\
		$(b)$& follows from Lemma \ref{Lemma repeated} with $\textbf{\textit{S}}\cup \{U_{\overline{l}}\}=\textbf{\textit{S}}_{l+1}$ and $\textbf{\textit{T}}=\textbf{\textit{S}}_{l+2}$,\\ 
		$(c)$& follows from Lemma \ref{Lemma repeated} with $\textbf{\textit{S}}\cup \{U_{\overline{l}}\}=\textbf{\textit{S}}_{l+2}$ and $\textbf{\textit{T}}=\textbf{\textit{S}}_{l+3}$,\\ 
		$(d)$& follows from repeated use of Lemma \ref{Lemma repeated} with $\textbf{\textit{S}}\cup \{U_{\overline{l}}\}=\textbf{\textit{S}}_{i}$ and $\textbf{\textit{T}}=\textbf{\textit{S}}_{i+1}$\\& for $l+3\leq i\leq j$.
	\end{tabular}

\end{proof}
\noindent In a similar fashion, for a sequence of sets $\textbf{\textit{T}}_{i}\subset\{U_{1},\dots,U_{K}\}\setminus \{U_{\overline{i}}\}$, such that $\textbf{\textit{T}}_{i+1}=\textbf{\textit{T}}_{i}\cup \{U_{\overline{i}}\}$, we can obtain
\begin{equation}
\label{Lemma sum decrement}
H(W_{[N-1]},Z_{\textbf{\textit{T}}_{j}},Z_{\overline{j}})+\sum_{i=l}^{j}H(W_{[N-1]},Z_{\textbf{\textit{T}}_{i}},X_{\textbf{\textit{d}}_{i}})\geq (j-l+1)N+H(W_{[N-1]},Z_{\textbf{\textit{T}}_{l}})
\end{equation}

\noindent For $1\leq i\leq N$, consider the sets of users as shown below:
\begin{table}[h!]
	\centering
	 \setlength{\tabcolsep}{1.5pt}
	\begin{tabular}{|c|c|c|c|}
		\hline
		Set &Users &Number& {Files Requested in Demand $\textbf{\textit{d}}_{i}$}\\
		\hline
		 {${\textbf{\textit{A}}_{i}}$}& {$U_{1},\dots, U_{N-i}$}& $N-i$& {$W_{i},\dots, W_{N-1}$}\\
		\hline
		 {${\textbf{\textit{B}}_{i}}$}& {$U_{K+2-i},\dots, U_{K}$}&$i-1$& {$W_{1},\dots, W_{i-1}$}\\
		\hline
		 {${\textbf{\textit{C}}_{i}}$}& {$U_{K+2-N-i},\dots, U_{N-i}$}&$\begin{array}{c}
		2N-K-1
		\end{array}$& {$W_{K-N+1},\dots,W_{N-1}$}\\
		\hline
		 {${\textbf{\textit{E}}}$}& {$U_{N+1},\dots, U_{K}$}&$K-N$& {$W_{1},\dots, W_{K-N}$}\\
		\hline
	\end{tabular}

\end{table}

\noindent These sets are also indicated in TABLE \ref{demands}. Note that
\begin{align}
\label{case 1 Null}
\textbf{\textit{A}}_{N}=\textbf{\textit{B}}_{1}=\textbf{\textit{C}}_{N}&=\phi\\
\label{case 1 Set A}
\textbf{\textit{A}}_{i+1}\cup \{U_{\overline{i+1}}\}&=\textbf{\textit{A}}_{i}\\
\label{case 1 Set B}
\textbf{\textit{B}}_{i}\cup \{U_{\overline{N+i}}\}&=\textbf{\textit{B}}_{i+1}\\
\label{case 1 B K-N}
\textbf{\textit{B}}_{K-N}\cup \{U_{\overline{K}}\}=\textbf{\textit{B}}_{K-N+1}&= \textbf{\textit{E}}\\
\label{case 1 AD}
\textbf{\textit{A}}_{i}\cap \textbf{\textit{C}}_{i}&=\textbf{\textit{C}}_{i}\\
\label{case 1 BD}
\textbf{\textit{B}}_{i}\cap  \textbf{\textit{E}}&=\left\{ \text{ }\begin{aligned}[cl]
\textbf{\textit{B}}_{i} &\text{         when $1\leq i\leq K-N$}\\
\textbf{\textit{E}}&\text{              when $K-N+1\leq i\leq N$}
\end{aligned}
\right.
\end{align}

It can  be noted that in the demands $\textbf{\textit{d}}_{i}$ and $ \textbf{\textit{d}}_{N+i}$, the users in set $\textbf{\textit{B}}_{i}$ are requesting for the same set of files $\{W_{1},\dots, W_{i-1}\}$ (for $1\leq i\leq N$). Thus, from (\ref{symmetry}) we have
\begin{equation}
\label{case 1 symmetry}
H(W_{[i-1]},Z_{\textbf{\textit{B}}_{i}},X_{\textbf{\textit{d}}_{i}})=H(W_{[i-1]},Z_{\textbf{\textit{B}}_{i}},X_{\textbf{\textit{d}}_{N+i}})
\end{equation}
Note that $\mid \textbf{\textit{A}}_{i}\cup \textbf{\textit{B}}_{i}\mid=\mid \textbf{\textit{C}}_{i}\cup\textbf{\textit{E}}\mid=N-1$. Thus, we have
\begin{align}
\label{case1:A,B}
(N-1)M+R\geq H(Z_{\textbf{\textit{A}}_{i}\cup \textbf{\textit{B}}_{i}})+H(X_{\textbf{\textit{d}}_{i}})\geq H(Z_{\textbf{\textit{A}}_{i}\cup \textbf{\textit{B}}_{i}},X_{\textbf{\textit{d}}_{i}})
\end{align}
Similarly,
\begin{align}
\label{case1:C,J}
(N-1)M+R\geq H(Z_{\textbf{\textit{C}}_{i}\cup \textbf{\textit{E}}})+H(X_{\textbf{\textit{d}}_{i}})\geq H(Z_{\textbf{\textit{C}}_{i}\cup \textbf{\textit{E}}},X_{\textbf{\textit{d}}_{i}})
\end{align}
\noindent Now, we have the following result:
\begin{theorem}
	\label{Main_Result_1}
	For the $(N,K)$ cache network, when $\big\lceil\frac{K+1}{2}\big\rceil\leq N\leq K$, achievable memory rate pairs $(M,R)$ must satisfy the constraint
	\begin{equation*}
	K(N-1)M+KR\geq KN-1.
	\end{equation*}
\end{theorem}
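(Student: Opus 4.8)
The plan is to mimic the chained entropy inequalities used in the $(3,4)$ example, but organized systematically around the four families of user sets $\textbf{\textit{A}}_i,\textbf{\textit{B}}_i,\textbf{\textit{C}}_i,\textbf{\textit{E}}$ and the cyclic demands $\textbf{\textit{d}}_1,\dots,\textbf{\textit{d}}_K$. First I would write $K(N-1)M+KR$ as a weighted sum of cache entropies $H(Z_\ell)$ and delivery entropies $H(X_{\textbf{\textit{d}}_i})$; the weights should be chosen so that, after applying (\ref{R}) and (\ref{M}), the terms reassemble (via submodularity) into the joint entropies appearing in (\ref{case1:A,B}) and (\ref{case1:C,J}). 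Concretely, I expect to use the bound $(N-1)M+R\geq H(Z_{\textbf{\textit{A}}_i\cup\textbf{\textit{B}}_i},X_{\textbf{\textit{d}}_i})$ for each $i$ with $1\le i\le N$ (that is $N$ copies, consuming the ``$K$'' on the left only partially, so the remaining $K-N$ copies of $R$ and the extra $M$-weight must be absorbed by the $\textbf{\textit{C}}_i\cup\textbf{\textit{E}}$ bound or by the repeated-demand symmetry (\ref{case 1 symmetry})).

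Next, starting from $\sum_i H(Z_{\textbf{\textit{A}}_i\cup\textbf{\textit{B}}_i},X_{\textbf{\textit{d}}_i})$, I would peel off the $W_{[N-1]}$ prefix using (\ref{I.1}) (each user in $\textbf{\textit{A}}_i$ or $\textbf{\textit{B}}_i$ requests a file in $W_{[N-1]}$ under $\textbf{\textit{d}}_i$, so the conditioning identity lets me insert $W_{[N-1]}$ for free), converting everything into the ``$H(W_{[N-1]},\cdot)$'' world where Lemma \ref{Lemma repeated}, Lemma \ref{Lemma sum increment}, and (\ref{Lemma sum decrement}) apply. The nesting relations (\ref{case 1 Set A})--(\ref{case 1 B K-N}) are exactly the hypotheses of Lemma \ref{Lemma sum increment} and (\ref{Lemma sum decrement}): the sets $\textbf{\textit{A}}_i$ form a decreasing chain as $i$ grows, the sets $\textbf{\textit{B}}_i$ an increasing chain, and $\textbf{\textit{B}}_{K-N+1}=\textbf{\textit{E}}$ glues the $\textbf{\textit{B}}$ chain to $\textbf{\textit{E}}$. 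So I would apply Lemma \ref{Lemma sum increment} along the $\textbf{\textit{A}}_i$ chain (telescoping down to $\textbf{\textit{A}}_N=\phi$) and (\ref{Lemma sum decrement}) along the $\textbf{\textit{B}}_i$ chain, each step contributing an additive $N$ and a term of the form $H(W_{[N-1]},Z_{\textbf{\textit{S}}})$ that either telescopes or matches a term from the $\textbf{\textit{C}}_i\cup\textbf{\textit{E}}$ side via (\ref{case 1 AD})--(\ref{case 1 BD}). Counting the number of ``$+N$'' increments produced should give the total $KN-1$ on the right-hand side (the ``$-1$'' coming from one leftover $H(W_{[N-1]})$ rather than a full $H(W_{[N]})=N$, exactly as the $3H(A,B,C)+H(A,B)$ ending in Lemma \ref{Theorem:(3,4)}).

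The role of (\ref{case1:C,J}) and the symmetry relation (\ref{case 1 symmetry}) is to handle the overlap $\textbf{\textit{C}}_i=\textbf{\textit{A}}_i\cap\textbf{\textit{C}}_i$ and the fact that $|\textbf{\textit{A}}_i\cup\textbf{\textit{B}}_i|$ may be less than $N-1$ for some $i$, so that the ``missing'' cache weight has to be supplied by the $\textbf{\textit{C}}_i\cup\textbf{\textit{E}}$ configuration; when $K-N+1\le i\le N$ we have $\textbf{\textit{B}}_i\cap\textbf{\textit{E}}=\textbf{\textit{E}}$, which is where the split in (\ref{case 1 BD}) forces a case distinction in the bookkeeping. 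I would also use (\ref{case 1 symmetry}) to replace $X_{\textbf{\textit{d}}_{N+i}}$ by $X_{\textbf{\textit{d}}_i}$ (conditioned on $W_{[i-1]}$ and $Z_{\textbf{\textit{B}}_i}$) so that all $K$ demands can be folded into the single telescoping chain rather than running two parallel chains.

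The main obstacle I anticipate is the exact choice of the integer weights on $H(Z_\ell)$ and $H(X_{\textbf{\textit{d}}_i})$ — i.e. verifying that the weighted combination of the ``$(N-1)M+R\ge\dots$'' inequalities genuinely telescopes with no slack and lands precisely on $KN-1$ rather than something weaker. This is the combinatorial heart of the argument: one must track, for each user $U_\ell$ and each demand index, how many times its cache $Z_\ell$ appears across all the $\textbf{\textit{A}}_i\cup\textbf{\textit{B}}_i$ and $\textbf{\textit{C}}_i\cup\textbf{\textit{E}}$ sets, and confirm that after all the submodularity merges and Lemma \ref{Lemma repeated}/\ref{Lemma sum increment} applications the cache terms cancel except for a total mass of $K(N-1)$ and the delivery terms cancel except for $K$. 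The symmetric, cyclic structure of TABLE \ref{demands} is what should make this possible, but carrying it out for general $N,K$ in the regime $\lceil (K+1)/2\rceil\le N\le K$ (so that $2N-K-1\ge 0$ and $\textbf{\textit{C}}_i$ is well-defined) will require careful case analysis at the boundary indices $i=K-N$ and $i=K-N+1$.
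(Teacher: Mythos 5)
Your proposal follows essentially the same route as the paper's proof: the paper writes $K(N-1)M+KR$ as $N$ copies of (\ref{case1:A,B}) plus $K-N$ copies of (\ref{case1:C,J}), swaps to $Z_{\textbf{\textit{A}}_{i}\cup \textbf{\textit{E}}}$ and $Z_{\textbf{\textit{B}}_{i}\cup\textbf{\textit{C}}_{i}}$ by submodularity using (\ref{case 1 AD})--(\ref{case 1 BD}), telescopes with Lemma \ref{Lemma sum increment} along the decreasing $\textbf{\textit{A}}_{i}$ chain (glued to $\textbf{\textit{E}}=\textbf{\textit{B}}_{K-N+1}$), folds the remaining $\textbf{\textit{B}}_{i}$ terms through the symmetry (\ref{case 1 symmetry}) into the increasing chain (\ref{Lemma sum decrement}), and finishes with a leftover $H(W_{[N-1]})\geq N-1$, exactly as you outline. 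The one slip is your worry that $\mid\textbf{\textit{A}}_{i}\cup\textbf{\textit{B}}_{i}\mid$ may fall below $N-1$ for some $i$: in fact $\mid\textbf{\textit{A}}_{i}\cup\textbf{\textit{B}}_{i}\mid=\mid\textbf{\textit{C}}_{i}\cup\textbf{\textit{E}}\mid=N-1$ for every $i$, which is precisely what makes the count of $K$ copies of $(N-1)M+R$ close without slack.
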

\begin{proof}
	We have,	
	\begin{IEEEeqnarray*}{rl}
	K(N-1)&M+KR=\sum_{i=1}^{K-N}\Big[(N-1)M+R+(N-1)M+R\Big]+\sum_{i=K-N+1}^{N}\Big[(N-1)M+R\Big]\\
	\overset{(a)}{\geq}&\sum_{i=1}^{K-N}\Big[H(Z_{\textbf{\textit{A}}_{i}\cup\textbf{\textit{B}}_{i}},X_{\textbf{\textit{d}}_i})+H(Z_{\textbf{\textit{C}}_{i}\cup \textbf{\textit{E}}},X_{\textbf{\textit{d}}_i})\Big]+\sum_{i=K-N+1}^{N}H(Z_{\textbf{\textit{A}}_{i}\cup\textbf{\textit{B}}_{i}},X_{\textbf{\textit{d}}_i})\\
	\overset{(b)}{=}&\sum_{i=1}^{K-N}\Big[H(W_{[N-1]},Z_{\textbf{\textit{A}}_{i}\cup\textbf{\textit{B}}_{i}},X_{\textbf{\textit{d}}_i})+H(W_{[N-1]},Z_{\textbf{\textit{C}}_{i}\cup \textbf{\textit{E}}},X_{\textbf{\textit{d}}_i})\Big]+\sum_{i=K-N+1}^{N}H(W_{[N-1]},Z_{\textbf{\textit{A}}_{i}\cup\textbf{\textit{B}}_{i}},X_{\textbf{\textit{d}}_i})\\
	\overset{(c)}{\geq}& \sum_{i=1}^{K-N}\Big[ H(W_{[N-1]},Z_{\textbf{\textit{A}}_{i}\cup \textbf{\textit{E}}},X_{\textbf{\textit{d}}_i})+H(W_{[N-1]},Z_{\textbf{\textit{B}}_{i}\cup\textbf{\textit{C}}_{i}},X_{\textbf{\textit{d}}_i})\Big]+\sum_{i=K-N+1}^{N}H(W_{[N-1]},Z_{\textbf{\textit{A}}_{i}\cup\textbf{\textit{B}}_{i}},X_{\textbf{\textit{d}}_i})\\
	\overset{}{\geq}&\left[H(W_{[N-1]},Z_{\textbf{\textit{A}}_{1}\cup \textbf{\textit{E}}})+ \sum_{i=2}^{K-N}H(W_{[N-1]},Z_{\textbf{\textit{A}}_{i}\cup \textbf{\textit{E}}},X_{\textbf{\textit{d}}_i})\right]+\sum_{i=K-N+1}^{N}H(W_{[N-1]},Z_{\textbf{\textit{A}}_{i}\cup\textbf{\textit{B}}_{i}},X_{\textbf{\textit{d}}_i})\\&+\sum_{i=1}^{K-N}H(W_{[N-1]},Z_{\textbf{\textit{B}}_{i}\cup\textbf{\textit{C}}_{i}},X_{\textbf{\textit{d}}_i})\\
	\overset{(d)}{\geq}&(K-N-1)N+\left[H(W_{[N-1]},Z_{\textbf{\textit{A}}_{K-N}\cup \textbf{\textit{E}}})+\sum_{i=K-N+1}^{N}H(W_{[N-1]},Z_{\textbf{\textit{A}}_{i}\cup\textbf{\textit{B}}_{i}},X_{\textbf{\textit{d}}_i})\right]\\&+\sum_{i=1}^{K-N}H(W_{[N-1]},Z_{\textbf{\textit{B}}_{i}\cup\textbf{\textit{C}}_{i}},X_{\textbf{\textit{d}}_i})\\
	\overset{(e)}{=}&(K-N-1)N+\left[H(W_{[N-1]},Z_{\textbf{\textit{A}}_{K-N}\cup \textbf{\textit{B}}_{K-N+1}})+\sum_{i=K-N+1}^{N}H(W_{[N-1]},Z_{\textbf{\textit{A}}_{i}\cup\textbf{\textit{B}}_{i}},X_{\textbf{\textit{d}}_i})\right]\\&+\sum_{i=1}^{K-N}H(W_{[N-1]},Z_{\textbf{\textit{B}}_{i}\cup\textbf{\textit{C}}_{i}},X_{\textbf{\textit{d}}_i})\\
	\overset{(f)}{\geq}&(N-1)N+H(W_{[N-1]},Z_{\textbf{\textit{A}}_{N}\cup \textbf{\textit{B}}_{K-N+1}}) +\sum_{i=1}^{K-N}H(W_{[N-1]},Z_{\textbf{\textit{B}}_{i}\cup\textbf{\textit{C}}_{i}},X_{\textbf{\textit{d}}_i})\\
	\overset{(g)}{=}&(N-1)N+H(W_{[N-1]},Z_{\textbf{\textit{B}}_{K-N+1}}) +\sum_{i=1}^{K-N}H(W_{[N-1]},Z_{\textbf{\textit{B}}_{i}\cup\textbf{\textit{C}}_{i}},X_{\textbf{\textit{d}}_i})\\
	\overset{}{\geq}&(N-1)N+H(W_{[N-1]},Z_{\textbf{\textit{B}}_{K-N+1}}) +\sum_{i=1}^{K-N}H(W_{[N-1]},Z_{\textbf{\textit{B}}_{i}},X_{\textbf{\textit{d}}_i})\\
	\overset{(h)}{=}&(N-1)N+\left[H(W_{[N-1]},Z_{\textbf{\textit{B}}_{K-N+1}}) +\sum_{i=1}^{K-N}H(W_{[N-1]},Z_{\textbf{\textit{B}}_{i}},X_{\textbf{\textit{d}}_{N+i}})\right]\\
	\overset{(i)}{\geq}&(K-1)N+H(W_{[N-1]},Z_{\textbf{\textit{B}}_{1}})\\
	\overset{(g)}{=}&(K-1)N+H(W_{[N-1]})
	\geq KN-1
	\end{IEEEeqnarray*}
		where 

	\begin{tabular}{cl}
		$(a)$& follows from (\ref{case1:A,B}) and (\ref{case1:C,J}),\\
		$(b)$& follows from (\ref{I.1}) and definition of sets $\textbf{\textit{A}}_{i}$, $\textbf{\textit{B}}_{i}$, $\textbf{\textit{C}}_{i}$ and $ \textbf{\textit{E}}$,\\
		$(c)$& follows from the submodularity property of entropy, and the facts that\\&  $\textbf{\textit{A}}_{i}\cap\textbf{\textit{C}}_{i}=\textbf{\textit{C}}_{i}$ and $\textbf{\textit{B}}_{i}\cap\textbf{\textit{E}}=\textbf{\textit{B}}_{i}$ for $1\leq i\leq K-N$ (refer (\ref{case 1 AD}) and (\ref{case 1 BD})),\\
		$(d)$& follows from Lemma \ref{Lemma sum increment} with $\textbf{\textit{S}}_{i}=\textbf{\textit{A}}_{i}\cup \textbf{\textit{E}}$, $l=1$, $j=K-N$ and (\ref{case 1 Set A}),\\
		$(e)$& follows from (\ref{case 1 B K-N})\\
		$(f)$& follows from Lemma \ref{Lemma sum increment} with $\textbf{\textit{S}}_{i}=\textbf{\textit{A}}_{i}\cup \textbf{\textit{B}}_{i}$, $l=K-N$, $j=N$ and (\ref{case 1 Set A}),\\
		$(g)$& follows from (\ref{case 1 Null}),\\
		$(h)$& follows from (\ref{case 1 symmetry}),\\
		$(i)$& follows from (\ref{Lemma sum decrement}) with $\textbf{\textit{T}}_{i}=\textbf{\textit{B}}_{i}$, $l=1$, $j=K-N$ and (\ref{case 1 Set B}).
	\end{tabular}

\end{proof}
\subsection{Case II: $1\leq N\leq \big\lceil\frac{K+1}{2}\big\rceil$ }
\noindent Consider the demand 
\begin{equation}
\textbf{\textit{d}}_{1}=(W_{1},W_{2},\dots, W_{N},W_{1},W_{2},\dots, W_{N-1}, W_{1},W_{1},\dots,W_{1})
\end{equation}
Demands $\{\textbf{\textbf{\textit{d}}}_{l}:2\leq l\leq K\}$, are obtained from the demand $\textbf{\textit{d}}_{1}$ by cyclic left shifts as shown in TABLE \ref{demands case 2}. 
\begin{table}[!h]
	\centering
	\setlength{\tabcolsep}{1.5pt}
	\begin{tabular}{r|c|c|c|>{\columncolor{gray!30!white}}c|c|c|c|c|>{\columncolor{gray!30!white}}c|c|c|c|c|>{\columncolor{gray!30!white}}c|c|c|}
		\hhline{~----------------}
		&Users&$\textbf{\textit{d}}_{1}$&\dots&$\textbf{\textit{d}}_{i}$&\dots&$\textbf{\textit{d}}_{N}$&$\textbf{\textit{d}}_{N+1}$&\dots&$\textbf{\textit{d}}_{N+i}$&\dots&$\textbf{\textit{d}}_{2N-1}$&$\textbf{\textit{d}}_{2N}$&\dots&$\textbf{\textit{d}}_{j}$&\dots&$\textbf{\textit{d}}_{K}$\\
		\hhline{~----------------}
		\ldelim\{{4.3}{5.5mm}[\parbox{3.5mm}{${\textbf{\textit{J}}_{i}}$}]\ldelim\{{3}{5.5mm}[\parbox{3.5mm}{${\textbf{\textit{A}}_{i}}$}]&$U_{1}$&$W_{1}$&\dots&$W_{i}$&\dots&$W_{N}$&$W_{1}$&\dots&$W_{i}$&\dots&$W_{N-1}$&$W_{1}$&\dots&$W_{1}$&\dots&$W_{1}$\\
		\hhline{~----------------}
		&\dots&\dots&\dots&\dots&\dots&\dots&\dots&\dots&\dots&\dots&\dots&\dots&\dots&\dots&\dots&\dots\\
		\hhline{~----------------}
		&$U_{N-i}$&$W_{N-i}$&\dots&$W_{N-1}$&\dots&$W_{\substack{N-i-1}}$&$W_{N-i}$&\dots&$W_{N-1}$&\dots&$W_{1}$&$W_{1}$&\dots&$W_{1}$&\dots&$W_{i-1}$\\
		\hhline{~----------------}
		&$U_{\substack{N-i+1}}$&$W_{\substack{N-i+1}}$&\dots&\cellcolor{LimeGreen!50!white}$W_{N}$&\dots&$W_{N-i}$&$W_{\substack{N-i+1}}$&\dots&$W_{1}$&\dots&$W_{1}$&$W_{1}$&\dots&$W_{1}$&\dots&$W_{N-i}$\\
		\hhline{~----------------}

		&\dots&\dots&\dots&\dots&\dots&\dots&\dots&\dots&\dots&\dots&\dots&\dots&\dots&\dots&\dots&\dots\\
		\hhline{~----------------}
		\ldelim\{{3}{5.5mm}[\parbox{3.5mm}{${\textbf{\textit{F}}_{i}}$}]&$U_{N+1}$&$W_{1}$&\dots&$W_{i}$&\dots&$W_{1}$&$W_{1}$&\dots&$W_{1}$&\dots&$W_{1}$&$W_{1}$&\dots&$W_{1}$&\dots&$W_{N}$\\
		\hhline{~----------------}
		&\dots&\dots&\dots&\dots&\dots&\dots&\dots&\dots&\dots&\dots&\dots&\dots&\dots&\dots&\dots&\dots\\
		\hhline{~----------------}
		&$U_{2N-i}$&$W_{N-i}$&\dots&$W_{N-1}$&\dots&$W_{1}$&$W_{1}$&\dots&$W_{1}$&\dots&$W_{1}$&$W_{1}$&\dots&$W_{1}$&\dots&$W_{\substack{N-i-1}}$\\
		\hhline{~----------------}
		\ldelim\{{12}{5.5mm}[\parbox{3.5mm}{${\textbf{\textit{G}}_{i}}$}]&$U_{\substack{2N-i+1}}$&$W_{\substack{N-i+1}}$&\dots&$W_{1}$&\dots&$W_{1}$&$W_{1}$&\dots&$W_{1}$&\dots&$W_{1}$&$W_{1}$&\dots&$W_{1}$&\dots&$W_{N-i}$\\

					\hhline{~----------------}
					&\dots&\dots&\dots&\dots&\dots&\dots&\dots&\dots&\dots&\dots&\dots&\dots&\dots&\dots&\dots&\dots\\
		\hhline{~----------------}
		&$U_{\substack{K-j+2}}$&$W_{1}$&\dots&$W_{1}$&\dots&$W_{1}$&$W_{1}$&\dots&$W_{1}$&\dots&$W_{1}$&$W_{1}$&\dots&$W_{1}$&\dots&$W_{1}$\\
		\hhline{~----------------}
		\ldelim\{{3.3}{16.5mm}[\parbox{3.5mm}{${\textbf{\textit{S}}_{j}}$}]&$U_{\substack{K-j+3}}$&$W_{1}$&\dots&$W_{1}$&\dots&$W_{1}$&$W_{1}$&\dots&$W_{1}$&\dots&$W_{1}$&$W_{1}$&\dots&$W_{2}$&\dots&$W_{1}$\\
		\hhline{~----------------}
		&\dots&\dots&\dots&\dots&\dots&\dots&\dots&\dots&\dots&\dots&\dots&\dots&\dots&\dots&\dots&\dots\\
		\hhline{~----------------}
		
		&$U_{\substack{K+N-j+1}}$&$W_{1}$&\dots&$W_{1}$&\dots&$W_{1}$&$W_{1}$&\dots&$W_{1}$&\dots&$W_{1}$&$W_{1}$&\dots&\cellcolor{LimeGreen!50!white}$W_{N}$&\dots&$W_{1}$\\
		\hhline{~----------------}

		\ldelim\{{3.3}{16.5mm}[\parbox{3.5mm}{${\textbf{\textit{P}}_{j}}$}]&$U_{\substack{K+N-j+2}}$&$W_{1}$&\dots&$W_{1}$&\dots&$W_{1}$&$W_{1}$&\dots&$W_{1}$&\dots&$W_{1}$&$W_{1}$&\dots&$W_{1}$&\dots&$W_{1}$\\
		\hhline{~----------------}
		&\dots&\dots&\dots&\dots&\dots&\dots&\dots&\dots&\dots&\dots&\dots&\dots&\dots&\dots&\dots&\dots\\
		\hhline{~----------------}
		&$U_{\substack{K+2N-j}}$&$W_{1}$&\dots&$W_{1}$&\dots&$W_{1}$&$W_{1}$&\dots&$W_{1}$&\dots&$W_{1}$&$W_{1}$&\dots&$W_{N-1}$&\dots&$W_{1}$\\
		\hhline{~----------------}
		
		\ldelim\{{7}{16.5mm}[\parbox{3.5mm}{${\textbf{\textit{Q}}_{j}}$}]&$U_{\substack{K+2N-j+1}}$&$W_{1}$&\dots&$W_{1}$&\dots&$W_{1}$&$W_{1}$&\dots&$W_{1}$&\dots&$W_{1}$&$W_{1}$&\dots&$W_{1}$&\dots&$W_{1}$\\
		\hhline{~----------------}
		&\dots&\dots&\dots&\dots&\dots&\dots&\dots&\dots&\dots&\dots&\dots&\dots&\dots&\dots&\dots&\dots\\
		\hhline{~----------------}

		&$U_{\substack{K-i+1}}$&$W_{1}$&\dots&$W_{1}$&\dots&$W_{N-i}$&$W_{\substack{N-i+1}}$&\dots&\cellcolor{LimeGreen!50!white}$W_{N}$&\dots&$W_{N-i}$&$W_{\substack{N-i+1}}$&\dots&$W_{1}$&\dots&$W_{1}$\\
		\hhline{~----------------}
		\ldelim\{{4}{11mm}[\parbox{3.5mm}{${\textbf{\textit{B}}_{i}}$}]&$U_{\substack{K-i+2}}$&$W_{1}$&\dots&$W_{1}$&\dots&$W_{\substack{N-i+1}}$&$W_{\substack{N-i+2}}$&\dots&$W_{1}$&\dots&$W_{\substack{N-i+1}}$&$W_{\substack{N-i+2}}$&\dots&$W_{1}$&\dots&$W_{1}$\\
		\hhline{~----------------}
		\ldelim\{{3}{5.5mm}[\parbox{3.5mm}{${\textbf{\textit{K}}_{i}}$}]&$U_{\substack{K-i+3}}$&$W_{1}$&\dots&$W_{2}$&\dots&$W_{\substack{N-i+2}}$&$W_{\substack{N-i+3}}$&\dots&$W_{2}$&\dots&$W_{\substack{N-i+2}}$&$W_{\substack{N-i+3}}$&\dots&$W_{1}$&\dots&$W_{1}$\\
		\hhline{~----------------}
		&\dots&\dots&\dots&\dots&\dots&\dots&\dots&\dots&\dots&\dots&\dots&\dots&\dots&\dots&\dots&\dots\\
		\hhline{~----------------}
		&$U_{K}$&$W_{1}$&\dots&$W_{i-1}$&\dots&$W_{N-1}$&$W_{N}$&\dots&$W_{i-1}$&\dots&$W_{N-2}$&$W_{N-1}$&\dots&$W_{1}$&\dots&$W_{1}$\\
		\hhline{~----------------}
	\end{tabular}
	\caption{Demand set $\{\textbf{\textit{d}}_{l}:1\leq l\leq K\}$}
	\label{demands case 2}
\end{table}

\noindent Consider the user index  $\overline{l}$ defined as
 \begin{equation}
 \label{case 2 ZWN}
 {\overline{l}}=\left\{\begin{aligned}
 {N+1-l}, &\text{ for $1\leq l\leq N$}\\
 {K+N+1-l}, &\text{ for $N+1\leq l\leq K$}
 \end{aligned}\right.
 \end{equation}
 It can be noted that in demand $\textbf{\textit{d}}_{l}$, the user $U_{\overline{l}}$ requires the file $W_{N}$. The following lemma is easy to obtain.
 \begin{lemma}
 	\label{case 2 reduction}
 	Let $\textbf{\textit{A}},\textbf{\textit{B}},\textbf{\textit{C}}\subset \{U_{1},\dots, U_{K}\}$ be such that in demand $\textbf{\textit{d}}_{l}$, every user in $\textbf{\textit{B}}$ requests the file $W_{1}$ and users in $\textbf{\textit{C}}$ together request all the files in $\{W_{2}, \dots,  W_{N}\}$. We have
 	\begin{equation*}
 	H(W_{[N-1]},Z_{\textbf{\textit{A}}},X_{\textbf{\textit{d}}_{l}})+\sum_{i\in \textbf{\textit{B}}}H(Z_{i})+\mid \textbf{\textit{B}}\mid H(X_{\textbf{\textit{d}}_{l}})+\mid \textbf{\textit{B}}\mid H(Z_{\textbf{\textit{C}}})\geq H(W_{[N-1]},Z_{\textbf{\textit{A}}\cup \textbf{\textit{B}}},X_{\textbf{\textit{d}}_{l}})+\mid \textbf{\textit{B}}\mid N
 	\end{equation*}
 \end{lemma}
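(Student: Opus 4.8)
The plan is to prove Lemma \ref{case 2 reduction} by induction on $|\textbf{\textit{B}}|$, peeling off one user of $\textbf{\textit{B}}$ at a time. The base case $|\textbf{\textit{B}}|=0$ is the trivial identity $H(W_{[N-1]},Z_{\textbf{\textit{A}}},X_{\textbf{\textit{d}}_{l}})\geq H(W_{[N-1]},Z_{\textbf{\textit{A}}},X_{\textbf{\textit{d}}_{l}})$. For the inductive step, fix one user $U_{b}\in \textbf{\textit{B}}$ and set $\textbf{\textit{B}}'=\textbf{\textit{B}}\setminus\{U_{b}\}$. The key single-user claim I would isolate first is the inequality
\begin{equation*}
H(W_{[N-1]},Z_{\textbf{\textit{A}}},X_{\textbf{\textit{d}}_{l}})+H(Z_{b})+H(X_{\textbf{\textit{d}}_{l}})+H(Z_{\textbf{\textit{C}}})\geq H(W_{[N-1]},Z_{\textbf{\textit{A}}\cup\{U_{b}\}},X_{\textbf{\textit{d}}_{l}})+N.
\end{equation*}
Granting this, one applies it with $\textbf{\textit{A}}$ replaced successively by $\textbf{\textit{A}},\textbf{\textit{A}}\cup\{U_{b_1}\},\dots$ and adds up, which gives exactly the multiset of terms $\sum_{i\in\textbf{\textit{B}}}H(Z_i)+|\textbf{\textit{B}}|H(X_{\textbf{\textit{d}}_l})+|\textbf{\textit{B}}|H(Z_{\textbf{\textit{C}}})$ on the left and the telescoped $H(W_{[N-1]},Z_{\textbf{\textit{A}}\cup\textbf{\textit{B}}},X_{\textbf{\textit{d}}_l})+|\textbf{\textit{B}}|N$ on the right.

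To establish the single-user claim, I would argue as follows. Since $U_b$ requests $W_1$ in demand $\textbf{\textit{d}}_l$, identity (\ref{I.1}) gives $H(Z_b,X_{\textbf{\textit{d}}_l})=H(W_1,Z_b,X_{\textbf{\textit{d}}_l})$, and more relevantly $H(W_{[N-1]},Z_b,X_{\textbf{\textit{d}}_l})=H(W_{[N-1]},Z_b,X_{\textbf{\textit{d}}_l})$ already contains $W_1$; the point is that $Z_b$ together with $X_{\textbf{\textit{d}}_l}$ recovers $W_1$. Now use submodularity to write $H(Z_b)+H(X_{\textbf{\textit{d}}_l})\geq H(Z_b,X_{\textbf{\textit{d}}_l})+H(\emptyset)=H(Z_b,X_{\textbf{\textit{d}}_l})$, and then submodularity again on the pair $(W_{[N-1]},Z_{\textbf{\textit{A}}},X_{\textbf{\textit{d}}_l})$ and $(Z_b,X_{\textbf{\textit{d}}_l})$:
\begin{equation*}
H(W_{[N-1]},Z_{\textbf{\textit{A}}},X_{\textbf{\textit{d}}_l})+H(Z_b,X_{\textbf{\textit{d}}_l})\geq H(W_{[N-1]},Z_{\textbf{\textit{A}}\cup\{U_b\}},X_{\textbf{\textit{d}}_l})+H(W_{[N-1]},X_{\textbf{\textit{d}}_l})\cap\ \ldots
\end{equation*}
wait — the intersection term is $H(X_{\textbf{\textit{d}}_l})$ or $H(W_{[N-1]}\cap\{W_1\},X_{\textbf{\textit{d}}_l})$ depending on overlap; I would instead intersect carefully so that the residual term is $H(Z_b,X_{\textbf{\textit{d}}_l})$ restricted appropriately, leaving a term that I can lower-bound by recovering the files. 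The cleanest route: after obtaining a term of the form $H(W_{[N-1]},X_{\textbf{\textit{d}}_l})$ (or a superset), I add $H(Z_{\textbf{\textit{C}}})$ and use submodularity once more together with the hypothesis that users in $\textbf{\textit{C}}$ collectively request $\{W_2,\dots,W_N\}$, so that $Z_{\textbf{\textit{C}}}$ with $X_{\textbf{\textit{d}}_l}$ recovers those files; combined with $W_{[N-1]}$ already present this yields all of $W_{[N]}$, and then (\ref{I.2}) collapses the entropy to $H(W_{[N]})=N$. That is where the ``$+N$'' comes from.

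The main obstacle, and the part requiring care, is bookkeeping the intersection terms in each submodularity application so that nothing is lost — i.e. ensuring that the ``residual'' entropy terms produced at each step are exactly the ones that can be bounded below by $N$ via the file-recovery identities (\ref{I.1}), (\ref{I.2}), rather than something weaker. In particular one must be careful that the user $U_b$ being peeled off is not in $\textbf{\textit{C}}$ (or handle the overlap), and that when $\textbf{\textit{A}}$ grows during the induction the demand-$\textbf{\textit{d}}_l$ request structure of the remaining users of $\textbf{\textit{B}}$ and of $\textbf{\textit{C}}$ is unchanged, which it is since $\textbf{\textit{d}}_l$ is fixed throughout. Once the single-user claim is stated with the right intersection terms, the induction is mechanical.
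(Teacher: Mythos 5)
Your proposal is correct and is essentially the paper's own argument reorganized as a per-user induction: the paper merges all of $\textbf{\textit{B}}$ in one batch, producing $\mid\textbf{\textit{B}}\mid$ residual copies of $H(W_{1},X_{\textbf{\textit{d}}_{l}})$, each of which is then paired with one copy of $H(Z_{\textbf{\textit{C}}})$ to give $H(W_{1},Z_{\textbf{\textit{C}}},X_{\textbf{\textit{d}}_{l}})=H(W_{[N]})=N$, exactly the per-step ``$+N$'' in your telescoping. The one point you left hanging --- which intersection term survives --- is resolved just as you suspected: rewrite $H(Z_{b},X_{\textbf{\textit{d}}_{l}})=H(W_{1},Z_{b},X_{\textbf{\textit{d}}_{l}})$ via (\ref{I.1}) \emph{before} applying submodularity, so that the residual is $H(W_{1},X_{\textbf{\textit{d}}_{l}})$ rather than $H(X_{\textbf{\textit{d}}_{l}})$; this is essential because $\textbf{\textit{C}}$ only covers $W_{2},\dots,W_{N}$, so $W_{1}$ must already be present in the residual for (\ref{I.2}) to yield the full $N$.
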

 \begin{proof}
 	We have,
 	\begin{align*}
	H(W_{[N-1]},Z_{\textbf{\textit{A}}},&X_{\textbf{\textit{d}}_{l}})+\sum_{i\in \textbf{\textit{B}}}H(Z_{i})+\mid \textbf{\textit{B}}\mid H(X_{\textbf{\textit{d}}_{l}})+\mid \textbf{\textit{B}}\mid H(Z_{\textbf{\textit{C}}})\\
	=&H(W_{[N-1]},Z_{\textbf{\textit{A}}},X_{\textbf{\textit{d}}_{l}})+\sum_{i\in \textbf{\textit{B}}}\Big[H(Z_{i})+H(X_{\textbf{\textit{d}}_{l}})\Big]+\mid \textbf{\textit{B}}\mid H(Z_{\textbf{\textit{C}}})\\
	\overset{(a)}{\geq}&H(W_{[N-1]},Z_{\textbf{\textit{A}}},X_{\textbf{\textit{d}}_{l}})+\sum_{i\in \textbf{\textit{B}}}H(Z_{i},X_{\textbf{\textit{d}}_{l}})+\mid \textbf{\textit{B}}\mid H(Z_{\textbf{\textit{C}}})\\
	\overset{(b)}{=}&H(W_{[N-1]},Z_{\textbf{\textit{A}}},X_{\textbf{\textit{d}}_{l}})+\sum_{i\in \textbf{\textit{B}}}H(W_{1},Z_{i},X_{\textbf{\textit{d}}_{l}})+\mid \textbf{\textit{B}}\mid H(Z_{\textbf{\textit{C}}})\\
	\overset{(a)}{\geq}&H(W_{[N-1]},Z_{\textbf{\textit{A}}},X_{\textbf{\textit{d}}_{l}})+\Big[H(W_{1},Z_{\textbf{\textit{B}}},X_{\textbf{\textit{d}}_{l}})+(\mid \textbf{\textit{B}}\mid-1)H(W_{1},X_{\textbf{\textit{d}}_{l}})\Big]+\mid \textbf{\textit{B}}\mid H(Z_{\textbf{\textit{C}}})\\
	\overset{(a)}{\geq}&H(W_{[N-1]},Z_{\textbf{\textit{A}}\cup \textbf{\textit{B}}},X_{\textbf{\textit{d}}_{l}})+H(W_{1},Z_{\textbf{\textit{A}}\cap \textbf{\textit{B}}},X_{\textbf{\textit{d}}_{l}})+(\mid \textbf{\textit{B}}\mid-1) H(W_{1},X_{\textbf{\textit{d}}_{l}})+\mid \textbf{\textit{B}}\mid H(Z_{\textbf{\textit{C}}})\\
	\overset{}{\geq}&H(W_{[N-1]},Z_{\textbf{\textit{A}}\cup \textbf{\textit{B}}},X_{\textbf{\textit{d}}_{l}})+\mid \textbf{\textit{B}}\mid \Big[H(W_{1},X_{\textbf{\textit{d}}_{l}})+H(Z_{\textbf{\textit{C}}})\Big]\\
	\overset{(a)}{\geq}&H(W_{[N-1]},Z_{\textbf{\textit{A}}\cup \textbf{\textit{B}}},X_{\textbf{\textit{d}}_{l}})+\mid \textbf{\textit{B}}\mid H(W_{1},Z_{\textbf{\textit{C}}},X_{\textbf{\textit{d}}_{l}})\\
	\overset{(c)}{=}&H(W_{[N-1]},Z_{\textbf{\textit{A}}\cup \textbf{\textit{B}}},X_{\textbf{\textit{d}}_{l}})+\mid \textbf{\textit{B}}\mid H(W_{[N]},Z_{\textbf{\textit{C}}},X_{\textbf{\textit{d}}_{l}})\\
	\overset{(d)}{=}&H(W_{[N-1]},Z_{\textbf{\textit{A}}\cup \textbf{\textit{B}}},X_{\textbf{\textit{d}}_{l}})+\mid \textbf{\textit{B}}\mid H(W_{[N]})\\\geq&H(W_{[N-1]},Z_{\textbf{\textit{A}}\cup \textbf{\textit{B}}},X_{\textbf{\textit{d}}_{l}})+\mid \textbf{\textit{B}}\mid N
 	\end{align*}
 	where
 	
 	\begin{tabular}{cl}
 		$(a)$& follows form the submodularity property of entropy,\\
 		$(b)$& follows from (\ref{I.1}) and the definition of $\textbf{\textit{B}}$,\\
 		$(c)$& follows from (\ref{I.1}) and the definition of $\textbf{\textit{C}}$,\\
 		$(d)$& follows from (\ref{I.2}).
 	\end{tabular}
 
 \end{proof}

 For $1\leq i\leq N$, consider the sets of users as shown below:
 \begin{table}[!h]
 	\centering
 	\begin{tabular}{|c|c|c|c|}
 		\hline
 		Set& Users& Number& Files Requested in Demand $\textbf{\textit{d}}_{i}$\\
 		\hline
 		 ${\textbf{\textit{A}}_{i}}$&$U_{1},\dots, U_{N-i}$&$N-i$&$W_{i},\dots,W_{N-1}$\\
 		 \hline
 		 ${\textbf{\textit{B}}_{i}}$&$U_{K-i+2},\dots, U_{K}$&$i-1$&$W_{1}\dots, W_{i-1}$\\
 		 \hline
 		 ${\textbf{\textit{F}}_{i}}$&$U_{N+1},\dots, U_{2N-i}$&$N-i$&$W_{i},\dots,W_{N-1}$\\
 		 \hline
 		 ${\textbf{\textit{G}}_{i}}$&$U_{2N-i+1},\dots, U_{K-i+1}$&$K-2N+1$&$W_{1}$\\
 		 \hline
 		 $\textbf{\textit{J}}_{i}$&$U_{1},\dots,U_{N-i+1}$&$N-i+1$&$W_{i}\dots, W_{N}$\\
 		 \hline
 		 $\textbf{\textit{K}}_{i}$&$U_{\substack{K-i+3}}\dots U_{K}$&$i-2$&$W_{2}\dots, W_{i-1}$\\
 		 \hline
 	\end{tabular}
 \end{table}
 
 \noindent These sets are also indicated in TABLE \ref{demands case 2}. Let 
 \begin{align}
 \label{case 2 I}
 \textbf{\textit{I}}_{i}=&\textbf{\textit{J}}_{i}\cup \textbf{\textit{K}}_{i}\\
 \label{case 2 L}
 \textbf{\textit{L}}_{i}=&\textbf{\textit{A}}_{i}\cup \textbf{\textit{B}}_{i} \cup \textbf{\textit{F}}_{i} \cup \textbf{\textit{G}}_{i}
 \end{align}
 Note that
 \begin{IEEEeqnarray}{rl}
 	\label{case 2 null}
 	\textbf{\textit{A}}_{N}=\textbf{\textit{B}}_{1}=\textbf{\textit{F}}_{N}=\textbf{\textit{K}}_{1}=\textbf{\textit{K}}_{2}&=\phi\\
 	\label{case 2 L increment}
 	\textbf{\textit{L}}_{i+1}\cup\{U_{\overline{i+1}}\}&=\textbf{\textit{L}}_{i}\\
 	\label{case 2 B decrement}
 	\textbf{\textit{B}}_{i}\cup \{U_{\overline{N+i}}\}&=\textbf{\textit{B}}_{i+1}
 \end{IEEEeqnarray}

 It can  be noted that in the demands $\textbf{\textit{d}}_{i}$ and $ \textbf{\textit{d}}_{N+i}$, users in the set $\textbf{\textit{B}}_{i}$ are requesting for the same set of files $\{W_{1},\dots, W_{i-1}\}$ (for $1\leq i\leq N$). Thus, from (\ref{symmetry}) we have
 \begin{equation}
 \label{case 2 symmetry}
 H(W_{[i-1]},Z_{\textbf{\textit{B}}_{i}},X_{\textbf{\textit{d}}_{i}})=H(W_{[i-1]},Z_{\textbf{\textit{B}}_{i}},X_{\textbf{\textit{d}}_{N+i}})
 \end{equation}
 Note that $\mid \textbf{\textit{A}}_{i}\cup \textbf{\textit{B}}_{i}\mid=\mid \textbf{\textit{B}}_{i}\cup\textbf{\textit{F}}_{i}\mid=N-1$. Thus,  we have
 \begin{align}
 \label{case2:A,B}
 (N-1)M+R\geq H(Z_{\textbf{\textit{A}}_{i}\cup \textbf{\textit{B}}_{i}})+H(X_{\textbf{\textit{d}}_{i}})\geq H(Z_{\textbf{\textit{A}}_{i}\cup \textbf{\textit{B}}_{i}},X_{\textbf{\textit{d}}_{i}})
 \end{align}
 Similarly,
 \begin{align}
 \label{case2:B,E}
 (N-1)M+R\geq H(Z_{\textbf{\textit{B}}_{i}\cup \textbf{\textit{F}}_{i}})+H(X_{\textbf{\textit{d}}_{i}})\geq H(Z_{\textbf{\textit{B}}_{i}\cup \textbf{\textit{F}}_{i}},X_{\textbf{\textit{d}}_{i}})
 \end{align}
 \noindent We can now obtain the following lemma:
 \begin{lemma}
 	\label{Lemma casse 2 part 1}
 	The sets $\textbf{\textit{B}}_{i}$ and $\textbf{\textit{L}}_{i}$, defined as above, satisfy
 	\begin{align*}
 	(N^{2}(K-2N+3)&-3N+1)M+(N(K-2N+3)-1)R\\ \geq& H(W_{[N-1]},Z_{\textbf{\textit{L}}_{N}})+\sum_{i=1}^{N-1}H(W_{[N-1]},Z_{\textbf{\textit{B}}_{i}},X_{\textbf{\textit{d}}_{N+i}})+N((K-2N+2)N-1)
 	\end{align*} 
 \end{lemma}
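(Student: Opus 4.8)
The plan is to process the demands $\textbf{\textit{d}}_{1},\dots,\textbf{\textit{d}}_{N}$ in turn, each time building the set $\textbf{\textit{L}}_{i}$ inside a single entropy term $H(W_{[N-1]},Z_{\textbf{\textit{L}}_{i}},X_{\textbf{\textit{d}}_{i}})$ while spinning off a residue $H(W_{[N-1]},Z_{\textbf{\textit{B}}_{i}},X_{\textbf{\textit{d}}_{N+i}})$, and then to collapse the first family of terms down to $H(W_{[N-1]},Z_{\textbf{\textit{L}}_{N}})$ by Lemma~\ref{Lemma sum increment}. The two tools are the memory--rate bounds (\ref{case2:A,B})--(\ref{case2:B,E}) and the reduction Lemma~\ref{case 2 reduction}, which enlarges the cache set inside an entropy term by swallowing a block of users who all request $W_{1}$, spending one $H(Z_{u})$ and one $H(X_{\textbf{\textit{d}}})$ per swallowed user plus $|\textbf{\textit{B}}|$ copies of $H(Z_{\textbf{\textit{C}}})$ for a set $\textbf{\textit{C}}$ covering $\{W_{2},\dots,W_{N}\}$.

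Fix $i$ with $1\le i\le N-1$. In $\textbf{\textit{d}}_{i}$ the users of $\textbf{\textit{A}}_{i}\cup\textbf{\textit{B}}_{i}$ and of $\textbf{\textit{B}}_{i}\cup\textbf{\textit{F}}_{i}$ each jointly request $W_{[N-1]}$, so by (\ref{I.1}) the right sides of (\ref{case2:A,B}) and (\ref{case2:B,E}) become $H(W_{[N-1]},Z_{\textbf{\textit{A}}_{i}\cup\textbf{\textit{B}}_{i}},X_{\textbf{\textit{d}}_{i}})$ and $H(W_{[N-1]},Z_{\textbf{\textit{B}}_{i}\cup\textbf{\textit{F}}_{i}},X_{\textbf{\textit{d}}_{i}})$. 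Adding the two inequalities and applying submodularity — the cache sets meet exactly in $\textbf{\textit{B}}_{i}$ since $\textbf{\textit{A}}_{i}$ and $\textbf{\textit{F}}_{i}$ are disjoint — gives
\[
2(N-1)M+2R\ \ge\ H(W_{[N-1]},Z_{\textbf{\textit{B}}_{i}},X_{\textbf{\textit{d}}_{i}})+H(W_{[N-1]},Z_{\textbf{\textit{A}}_{i}\cup\textbf{\textit{B}}_{i}\cup\textbf{\textit{F}}_{i}},X_{\textbf{\textit{d}}_{i}}).
\]
I then feed the second term to Lemma~\ref{case 2 reduction} with $\textbf{\textit{A}}=\textbf{\textit{A}}_{i}\cup\textbf{\textit{B}}_{i}\cup\textbf{\textit{F}}_{i}$, $\textbf{\textit{B}}=\textbf{\textit{G}}_{i}$ and $\textbf{\textit{C}}=\textbf{\textit{I}}_{i}=\textbf{\textit{J}}_{i}\cup\textbf{\textit{K}}_{i}$ (every user of $\textbf{\textit{G}}_{i}$ requests $W_{1}$ in $\textbf{\textit{d}}_{i}$, and $\textbf{\textit{I}}_{i}$ has $N-1$ users together requesting $\{W_{2},\dots,W_{N}\}$ in $\textbf{\textit{d}}_{i}$), and bound $H(Z_{u})\le M$, $H(X_{\textbf{\textit{d}}_{i}})\le R$, $H(Z_{\textbf{\textit{I}}_{i}})\le(N-1)M$; this replaces the second term by $H(W_{[N-1]},Z_{\textbf{\textit{L}}_{i}},X_{\textbf{\textit{d}}_{i}})$ at an extra cost of $|\textbf{\textit{G}}_{i}|NM+|\textbf{\textit{G}}_{i}|R$ and a credit $|\textbf{\textit{G}}_{i}|N$, with $|\textbf{\textit{G}}_{i}|=K-2N+1$. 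Rewriting the peeled-off term via (\ref{case 2 symmetry}) as $H(W_{[N-1]},Z_{\textbf{\textit{B}}_{i}},X_{\textbf{\textit{d}}_{N+i}})$, I get, for each such $i$,
\[
\big(N(K-2N+3)-2\big)M+(K-2N+3)R\ \ge\ H(W_{[N-1]},Z_{\textbf{\textit{B}}_{i}},X_{\textbf{\textit{d}}_{N+i}})+H(W_{[N-1]},Z_{\textbf{\textit{L}}_{i}},X_{\textbf{\textit{d}}_{i}})+(K-2N+1)N.
\]

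For $i=N$ the sets $\textbf{\textit{A}}_{N}$ and $\textbf{\textit{F}}_{N}$ are empty by (\ref{case 2 null}), so no splitting is needed: (\ref{case2:A,B}) becomes $(N-1)M+R\ge H(W_{[N-1]},Z_{\textbf{\textit{B}}_{N}},X_{\textbf{\textit{d}}_{N}})$, and one use of Lemma~\ref{case 2 reduction} with $\textbf{\textit{B}}=\textbf{\textit{G}}_{N}$, $\textbf{\textit{C}}=\textbf{\textit{I}}_{N}$ (recall $\textbf{\textit{L}}_{N}=\textbf{\textit{B}}_{N}\cup\textbf{\textit{G}}_{N}$) gives $(N(K-2N+2)-1)M+(K-2N+2)R\ge H(W_{[N-1]},Z_{\textbf{\textit{L}}_{N}},X_{\textbf{\textit{d}}_{N}})+(K-2N+1)N$. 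Summing the $N-1$ inequalities above with this last one, the left side is exactly $\big(N^{2}(K-2N+3)-3N+1\big)M+\big(N(K-2N+3)-1\big)R$, the residues accumulate to $\sum_{i=1}^{N-1}H(W_{[N-1]},Z_{\textbf{\textit{B}}_{i}},X_{\textbf{\textit{d}}_{N+i}})$, the total credit is $N^{2}(K-2N+1)$, and what remains on the right is $\sum_{i=1}^{N}H(W_{[N-1]},Z_{\textbf{\textit{L}}_{i}},X_{\textbf{\textit{d}}_{i}})$. Dropping $X_{\textbf{\textit{d}}_{1}}$ from the $i=1$ term to produce the seed $H(W_{[N-1]},Z_{\textbf{\textit{L}}_{1}})$ and invoking Lemma~\ref{Lemma sum increment} with $\textbf{\textit{S}}_{i}=\textbf{\textit{L}}_{i}$ (admissible since $\textbf{\textit{L}}_{i+1}\cup\{U_{\overline{i+1}}\}=\textbf{\textit{L}}_{i}$ by (\ref{case 2 L increment}) and no user of $\textbf{\textit{L}}_{i}$ requests $W_{N}$ in $\textbf{\textit{d}}_{i}$), $l=1$, $j=N$, turns $\sum_{i=2}^{N}H(W_{[N-1]},Z_{\textbf{\textit{L}}_{i}},X_{\textbf{\textit{d}}_{i}})$ into $(N-1)N+H(W_{[N-1]},Z_{\textbf{\textit{L}}_{N}})$; combining with $N^{2}(K-2N+1)+(N-1)N=N((K-2N+2)N-1)$ yields the claim.

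I expect the main difficulty to be organizational rather than analytical: one must realize that demand $\textbf{\textit{d}}_{N}$ has to be processed (stopping at $\textbf{\textit{d}}_{N-1}$ undershoots the constant) and that $X_{\textbf{\textit{d}}_{1}}$ must be left un-consumed to serve as the telescoping seed, which is precisely what makes the constant close at $N((K-2N+2)N-1)$. A secondary point, handled exactly as in the proof of Theorem~\ref{Main_Result_1}, is that (\ref{case 2 symmetry}) is applied to a term carrying $W_{[N-1]}$ rather than merely $W_{[i-1]}$; this is legitimate by the relabelling invariance of symmetric schemes, since there is a permutation mapping $\textbf{\textit{d}}_{i}$ to $\textbf{\textit{d}}_{N+i}$ that fixes $\textbf{\textit{B}}_{i}$ and fixes $W_{[N-1]}$. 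Everything else — reading off Table~\ref{demands case 2} that $\textbf{\textit{G}}_{i}$ requests only $W_{1}$ in $\textbf{\textit{d}}_{i}$, that $\textbf{\textit{I}}_{i}$ covers $\{W_{2},\dots,W_{N}\}$ with $|\textbf{\textit{I}}_{i}|=N-1$, and that $\textbf{\textit{A}}_{i},\textbf{\textit{B}}_{i},\textbf{\textit{F}}_{i}$ are pairwise disjoint with $\textbf{\textit{G}}_{i}$ nonempty in the regime $2N\le K+1$ — is routine bookkeeping.
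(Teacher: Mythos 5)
Your proof is correct and follows essentially the same route as the paper's: the same budget decomposition, the bounds (\ref{case2:A,B})--(\ref{case2:B,E}), Lemma \ref{case 2 reduction} with $\textbf{\textit{B}}=\textbf{\textit{G}}_{i}$, $\textbf{\textit{C}}=\textbf{\textit{I}}_{i}$, a submodularity step splitting off $H(W_{[N-1]},Z_{\textbf{\textit{B}}_{i}},X_{\textbf{\textit{d}}_{i}})$, the telescoping via Lemma \ref{Lemma sum increment} with $\textbf{\textit{S}}_{i}=\textbf{\textit{L}}_{i}$, and (\ref{case 2 symmetry}). The only differences are cosmetic — you apply submodularity before absorbing $\textbf{\textit{G}}_{i}$ (so Lemma \ref{case 2 reduction} is invoked with $\textbf{\textit{A}}=\textbf{\textit{A}}_{i}\cup\textbf{\textit{B}}_{i}\cup\textbf{\textit{F}}_{i}$ rather than $\textbf{\textit{A}}_{i}\cup\textbf{\textit{B}}_{i}$) and you do the accounting per index $i$ rather than in aggregate; the constants all close correctly.
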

 
 \begin{proof}
 	 We have,
 	\begin{IEEEeqnarray*}{rl}
 	(N^{2}&(K-2N+3)-3N+1)M+(N(K-2N+3)-1)R\\
 	=&\sum_{i=1}^{N}\Big[(N-1)M+R+(K-2N+1)M+(K-2N+1)(R+(N-1)M)\Big]+\sum_{i=1}^{N-1}[(N-1)M+R]\\
 	\overset{(a)}{\geq}&\sum_{i=1}^{N}\bigg[H(Z_{\textbf{\textit{A}}_{i}\cup\textbf{\textit{B}}_{i}},X_{\textbf{\textit{d}}_{i}})+\sum_{j\in \textbf{\textit{G}}_{i}}H(Z_{j})+\mid \textbf{\textit{G}}_{i} \mid H(X_{\textbf{\textit{d}}_{i}})+\mid \textbf{\textit{G}}_{i} \mid H(Z_{\textbf{\textit{I}}_{i}})\bigg]+\sum_{i=1}^{N-1}H(Z_{\textbf{\textit{B}}_{i}\cup\textbf{\textit{F}}_{i}},X_{\textbf{\textit{d}}_{i}})\\
 	\overset{(b)}{=}&\sum_{i=1}^{N}\bigg[H(W_{[N-1]},Z_{\textbf{\textit{A}}_{i}\cup\textbf{\textit{B}}_{i}},X_{\textbf{\textit{d}}_{i}})+\sum_{j\in \textbf{\textit{G}}_{i}}H(Z_{j})+\mid \textbf{\textit{G}}_{i} \mid H(X_{\textbf{\textit{d}}_{i}})+\mid \textbf{\textit{G}}_{i} \mid H(Z_{\textbf{\textit{I}}_{i}})\bigg]\\&+\sum_{i=1}^{N-1}H(W_{[N-1]},Z_{\textbf{\textit{B}}_{i}\cup\textbf{\textit{F}}_{i}},X_{\textbf{\textit{d}}_{i}})\\
 	\overset{(c)}{\geq}&\sum_{i=1}^{N}\bigg[H(W_{[N-1]},Z_{\textbf{\textit{A}}_{i}\cup\textbf{\textit{B}}_{i}\cup \textbf{\textit{G}}_{i}},X_{\textbf{\textit{d}}_{i}})+\mid \textbf{\textit{G}}_{i} \mid N\bigg]+\sum_{i=1}^{N-1}H(W_{[N-1]},Z_{\textbf{\textit{B}}_{i}\cup\textbf{\textit{F}}_{i}},X_{\textbf{\textit{d}}_{i}})\\
 	\overset{}{=}&H(W_{[N-1]},Z_{\textbf{\textit{A}}_{N}\cup\textbf{\textit{B}}_{N}\cup \textbf{\textit{G}}_{N}},X_{\textbf{\textit{d}}_{N}})+\sum_{i=1}^{N-1}\Big[H(W_{[N-1]},Z_{\textbf{\textit{A}}_{i}\cup\textbf{\textit{B}}_{i}\cup \textbf{\textit{G}}_{i}},X_{\textbf{\textit{d}}_{i}})+H(W_{[N-1]},Z_{\textbf{\textit{B}}_{i}\cup\textbf{\textit{F}}_{i}},X_{\textbf{\textit{d}}_{i}})\Big]\\&+\sum_{i=1}^{N}\mid \textbf{\textit{G}}_{i} \mid N\\
 	\overset{(d)}{\geq}&H(W_{[N-1]},Z_{\textbf{\textit{A}}_{N}\cup\textbf{\textit{B}}_{N}\cup \textbf{\textit{G}}_{N}\cup\textbf{\textit{F}}_{N}},X_{\textbf{\textit{d}}_{N}})+\sum_{i=1}^{N-1}\Big[H(W_{[N-1]},Z_{\textbf{\textit{A}}_{i}\cup\textbf{\textit{B}}_{i}\cup \textbf{\textit{G}}_{i}\cup\textbf{\textit{F}}_{i}},X_{\textbf{\textit{d}}_{i}})+H(W_{[N-1]},Z_{\textbf{\textit{B}}_{i}},X_{\textbf{\textit{d}}_{i}})\Big]\\&+(K-2N+1) N^{2}\\
 	\overset{(e)}{=}&\sum_{i=1}^{N}H(W_{[N-1]},Z_{\textbf{\textit{L}}_{i}},X_{\textbf{\textit{d}}_{i}})+\sum_{i=1}^{N-1}H(W_{[N-1]},Z_{\textbf{\textit{B}}_{i}},X_{\textbf{\textit{d}}_{i}})+(K-2N+1) N^{2}\\
 	\overset{}{\geq}&\left[H(W_{[N-1]},Z_{\textbf{\textit{L}}_{1}})+\sum_{i=2}^{N}H(W_{[N-1]},Z_{\textbf{\textit{L}}_{i}},X_{\textbf{\textit{d}}_{i}})\right]+\sum_{i=1}^{N-1}H(W_{[N-1]},Z_{\textbf{\textit{B}}_{i}},X_{\textbf{\textit{d}}_{i}})+(K-2N+1) N^{2}\\
 	\overset{(f)}{\geq}&\Big[(N-1)N+H(W_{[N-1]},Z_{\textbf{\textit{L}}_{N}})\Big]+\sum_{i=1}^{N-1}H(W_{[N-1]},Z_{\textbf{\textit{B}}_{i}},X_{\textbf{\textit{d}}_{i}})+(K-2N+1) N^{2}\\
 	\overset{(g)}{=}&H(W_{[N-1]},Z_{\textbf{\textit{L}}_{N}})+\sum_{i=1}^{N-1}H(W_{[N-1]},Z_{\textbf{\textit{B}}_{i}},X_{\textbf{\textit{d}}_{N+i}})+N((K-2N+2)N-1)
 	\end{IEEEeqnarray*}
 	where 
 	
 	\begin{tabular}{cl}
 		$(a)$& follows from (\ref{case2:A,B}) and (\ref{case2:B,E}),\\
 		$(b)$& follows from (\ref{I.1}) and definition of set $\textbf{\textit{A}}_{i}$, $\textbf{\textit{B}}_{i}$, and $\textbf{\textit{F}}_{i}$,\\
 		$(c)$& follows from Lemma \ref{case 2 reduction} with $\textbf{\textit{A}}=\textbf{\textit{A}}_{i}\cup \textbf{\textit{B}}_{i}$, $\textbf{\textit{B}}=\textbf{\textit{G}}_{i}$ and $\textbf{\textit{C}}=\textbf{\textit{I}}_{i}$,\\
 		$(d)$& follows from (\ref{case 2 null}) the submodularity property of entropy,\\
 		$(e)$& follows from the definition of $\textbf{\textit{L}}_{i}$,\\
 		$(f)$& follows from Lemma \ref{Lemma sum increment} with $\textbf{\textit{S}}_{i}=\textbf{\textit{L}}_{i}$, $l=1$, $j=N$ and (\ref{case 2 L increment}),\\
 		$(g)$& follows from (\ref{case 2 symmetry}).
 	\end{tabular}

 \end{proof}

Now, for $2N\leq j\leq K$, consider another sets of users as shown below:

\begin{table}[h]
	\centering
	\begin{tabular}{|c|c|c|c|}
		\hline
		Set & Users & Number& Files Requested in Demand $\textbf{\textit{d}}_{j}$\\
		\hline
		${\textbf{\textit{P}}_{j}}$&$U_{K+N+2-j},\dots, U_{K+2N-j}$&$N-1$& $W_{1},\dots, W_{N-1}$\\
		\hline
		${\textbf{\textit{Q}}_{j}}$&$U_{K+2N+1-j},\dots, U_{K}$&$j-2N$&$W_{1}$\\
		\hline
		$\textbf{\textit{S}}_{j}$&$U_{K-j+3},\dots,U_{K+N-j+2}$&$N-1$&$W_{2},\dots, W_{N}$\\
		\hline
	\end{tabular}
\end{table}
These sets are also indicated in TABLE \ref{demands case 2}. Let
\begin{equation}
\label{case 2 T}
\textbf{\textit{T}}_{j}=\textbf{\textit{P}}_{j}\cup \textbf{\textit{Q}}_{j}
\end{equation}
Note that
\begin{IEEEeqnarray}{rl}
\label{case 2 null Q}
\textbf{\textit{Q}}_{2N}=\textbf{\textit{S}}_{2N}&=\phi\\
\label{case 2 T decrement}
\textbf{\textit{T}}_{j+1}\cup \{U_{\overline{j+1}}\}&=\textbf{\textit{T}}_{j}\\
\label{case 2 LT}
\textbf{\textit{T}}_{K}\cup \{U_{\overline{K}}\}&=\textbf{\textit{L}}_{N}\\
\label{case 2 BT}
\textbf{\textit{B}}_{N-1}\cup \{U_{\overline{2N-1}}\}=\textbf{\textit{B}}_{N}&=\textbf{\textit{T}}_{2N}
\end{IEEEeqnarray}
Note that $\mid \textbf{\textit{P}}_j\mid= N-1$. Thus, we have
\begin{align}
\label{case2:P}
(N-1)M+R\geq H(Z_{\textbf{\textit{P}}_{j}})+H(X_{\textbf{\textit{d}}_j})\geq H(Z_{\textbf{\textit{P}}_{j}},X_{\textbf{\textit{d}}_{j}})
\end{align}
The following lemma is easy to obtain:
\begin{lemma}
	\label{Lemma case 2 part 2}
	The set $\textbf{\textit{T}}_{j}$, as defined above, satisfy
	\begin{IEEEeqnarray*}{rl}
	\frac{(K-2N+1)}{2}\bigg[\big(N(K-2N+2)-2\big) M+(K-2N+2)R\bigg] \geq & \sum_{j=2N}^{K}H(W_{[N-1]},Z_{\textbf{\textit{T}}_{j}},X_{\textbf{\textit{d}}_{j}})\\&+\frac{(K-2N+1)(K-2N)}{2}N
	\end{IEEEeqnarray*}
\end{lemma}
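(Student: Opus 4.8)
The plan is to mimic the telescoping argument used in Lemmas~\ref{Lemma sum increment} and \ref{Lemma casse 2 part 1}, but now applied to the chain of sets $\{\textbf{\textit{T}}_{j}: 2N\leq j\leq K\}$. First I would expand the left-hand side so that each index $j$ in the range $2N\leq j\leq K$ contributes one copy of $(N-1)M+R$ coming from the bound (\ref{case2:P}) on $H(Z_{\textbf{\textit{P}}_{j}},X_{\textbf{\textit{d}}_{j}})$, plus $(K-2N)$ copies per $j$ of $R+(N-1)M$ to be absorbed by a Lemma~\ref{case 2 reduction}-type reduction applied to the users in $\textbf{\textit{Q}}_{j}$. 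Indeed, $\sum_{j=2N}^{K}\mid\textbf{\textit{Q}}_{j}\mid=\sum_{j=2N}^{K}(j-2N)=\frac{(K-2N+1)(K-2N)}{2}$, which matches the $N$-coefficient on the right-hand side; and $\frac{(K-2N+1)}{2}(K-2N+2)$ is exactly the total number of $R$-terms being spent. So the first step is purely bookkeeping: check that $\frac{(K-2N+1)}{2}\big[(N(K-2N+2)-2)M+(K-2N+2)R\big]$ equals $\sum_{j=2N}^{K}\big[(N-1)M+R+(j-2N)((N-1)M+R+H(Z_{\textbf{\textit{C}}}))\text{-type terms}\big]$ after the reductions are carried out.

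Next I would, for each fixed $j$, apply Lemma~\ref{case 2 reduction} with $\textbf{\textit{A}}=\textbf{\textit{P}}_{j}$, $\textbf{\textit{B}}=\textbf{\textit{Q}}_{j}$, and $\textbf{\textit{C}}=\textbf{\textit{S}}_{j}$: by the structure of TABLE~\ref{demands case 2}, in demand $\textbf{\textit{d}}_{j}$ every user in $\textbf{\textit{Q}}_{j}$ requests $W_{1}$, and the users in $\textbf{\textit{S}}_{j}$ together request $\{W_{2},\dots,W_{N}\}$, so the hypotheses of Lemma~\ref{case 2 reduction} hold. This turns the per-$j$ left-hand contribution into
\[
H(W_{[N-1]},Z_{\textbf{\textit{P}}_{j}\cup\textbf{\textit{Q}}_{j}},X_{\textbf{\textit{d}}_{j}})+\mid\textbf{\textit{Q}}_{j}\mid N
= H(W_{[N-1]},Z_{\textbf{\textit{T}}_{j}},X_{\textbf{\textit{d}}_{j}})+(j-2N)N,
\]
using (\ref{case 2 T}) and $\mid\textbf{\textit{Q}}_{j}\mid=j-2N$. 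Here I first need to insert the $W_{[N-1]}$ via (\ref{I.1}) — since in demand $\textbf{\textit{d}}_{j}$ the users in $\textbf{\textit{P}}_{j}$ together request $\{W_{1},\dots,W_{N-1}\}$, the identity $H(Z_{\textbf{\textit{P}}_{j}},X_{\textbf{\textit{d}}_{j}})=H(W_{[N-1]},Z_{\textbf{\textit{P}}_{j}},X_{\textbf{\textit{d}}_{j}})$ holds. Summing over $j=2N$ to $K$ and using $\sum_{j=2N}^{K}(j-2N)N=\frac{(K-2N+1)(K-2N)}{2}N$ then yields exactly the claimed inequality.

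The step I expect to require the most care is the very first one: matching the somewhat opaque coefficients $\frac{(K-2N+1)}{2}(N(K-2N+2)-2)$ and $\frac{(K-2N+1)}{2}(K-2N+2)$ against the sum of the elementary single-index budget $(N-1)M+R$ over $j$ together with the $(j-2N)$-many extra $\big((N-1)M+R\big)$ terms needed to feed Lemma~\ref{case 2 reduction} at each $j$ (one $H(Z_{j'})$ and one $H(X_{\textbf{\textit{d}}_{j}})$ per element of $\textbf{\textit{Q}}_{j}$, plus $\mid\textbf{\textit{Q}}_{j}\mid H(Z_{\textbf{\textit{S}}_{j}})$ with $\mid\textbf{\textit{S}}_{j}\mid=N-1$). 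Concretely: $\sum_{j=2N}^{K}\big[1+(j-2N)(1+(N-1)+\,\text{one }R)\big]$ must reproduce the stated $M$- and $R$-coefficients; writing $m=K-2N$, one checks $\sum_{j=0}^{m}(1+jN)=(m+1)+N\frac{m(m+1)}{2}=\frac{(m+1)(Nm+2)}{2}$ for the combined $(N-1)M+R$ count relevant to $R$, and the $M$-side picks up the additional $(N-1)$ per $\textbf{\textit{Q}}_{j}$-element giving $\frac{(m+1)}{2}(N(m+1)-2)$ after simplification — i.e. exactly $\frac{(K-2N+1)}{2}(N(K-2N+2)-2)$. Once this accounting is verified, everything else is a direct application of Lemma~\ref{case 2 reduction}, (\ref{I.1}), (\ref{case2:P}), and (\ref{case 2 T}).
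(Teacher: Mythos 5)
Your proposal is correct and takes essentially the same route as the paper's proof: decompose the left-hand side as $\sum_{j=2N}^{K}\big[((N-1)M+R)+(j-2N)M+(j-2N)(R+(N-1)M)\big]$, bound these terms by $H(Z_{\textbf{\textit{P}}_{j}},X_{\textbf{\textit{d}}_{j}})$, $\sum_{l\in\textbf{\textit{Q}}_{j}}H(Z_{l})$, $\mid\textbf{\textit{Q}}_{j}\mid H(X_{\textbf{\textit{d}}_{j}})$ and $\mid\textbf{\textit{Q}}_{j}\mid H(Z_{\textbf{\textit{S}}_{j}})$, insert $W_{[N-1]}$ via (\ref{I.1}), and collapse each summand with Lemma \ref{case 2 reduction} applied to $\textbf{\textit{A}}=\textbf{\textit{P}}_{j}$, $\textbf{\textit{B}}=\textbf{\textit{Q}}_{j}$, $\textbf{\textit{C}}=\textbf{\textit{S}}_{j}$. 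The only blemishes are typographical slips in your bookkeeping paragraph (the $R$-coefficient is $\sum_{j=0}^{m}(1+j)=\tfrac{(m+1)(m+2)}{2}$, not $\sum_{j=0}^{m}(1+jN)$, and the $M$-coefficient simplifies to $\tfrac{m+1}{2}(N(m+2)-2)$, not $\tfrac{m+1}{2}(N(m+1)-2)$); your final stated coefficients are nevertheless the correct ones, so the argument stands.
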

\begin{proof}
	We have,
	\begin{IEEEeqnarray*}{l}
		\frac{(K-2N+1)}{2}\bigg[\big(N(K-2N+2)-2\big) M+(K-2N+2)R\bigg]\\
		=\sum_{j=2N}^{K}\Big[((N-1)M+R)+(j-2N)M+(j-2N)(R+(N-1)M)\Big]\\
		\overset{(a)}{\geq}  \sum_{j=2N}^{K}\left[H(Z_{\textbf{\textit{P}}_{j}},X_{\textbf{\textit{d}}_{j}})+ \sum_{l\in \textbf{\textit{Q}}_{j}}H(Z_{l})+\mid \textbf{\textit{Q}}_{j}\mid H(X_{\textbf{\textit{d}}_{j}})+\mid \textbf{\textit{Q}}_{j} \mid H(Z_{\textbf{\textit{S}}_{j}})\right]\\
		\overset{(b)}{=}  \sum_{j=2N}^{K}\left[H(W_{[N-1]},Z_{\textbf{\textit{P}}_{j}},X_{\textbf{\textit{d}}_{j}})+ \sum_{l\in \textbf{\textit{Q}}_{j}}H(Z_{l})+\mid \textbf{\textit{Q}}_{j} \mid H(X_{\textbf{\textit{d}}_{j}})+\mid \textbf{\textit{Q}}_{j} \mid H(Z_{\textbf{\textit{S}}_{j}})\right]\\
		\overset{(c)}{\geq}  \sum_{j=2N}^{K}\left[H(W_{[N-1]},Z_{\textbf{\textit{P}}_{j}\cup \textbf{\textit{Q}}_{j}},X_{\textbf{\textit{d}}_{j}})+\mid \textbf{\textit{Q}}_{j} \mid N\right]\\
		\overset{(d)}{=}\sum_{j=2N}^{K}H(W_{[N-1]},Z_{\textbf{\textit{T}}_{j}},X_{\textbf{\textit{d}}_{j}})+\sum_{j=2N}^{K}(j-2N)N\\
		\overset{}{=}\sum_{j=2N}^{K}H(W_{[N-1]},Z_{\textbf{\textit{T}}_{j}},X_{\textbf{\textit{d}}_{j}})+\frac{(K-2N+1)(K-2N)}{2}N
	\end{IEEEeqnarray*}
where 

\begin{tabular}{cl}
	$(a)$& follows from (\ref{case2:P}) and the fact that $\textbf{\textit{Q}}_{2N}=\phi$,\\
	$(b)$& follows from definition of sets $\textbf{\textit{P}}_{j}$, $\textbf{\textit{Q}}_{j}$ and (\ref{I.1}),\\
	$(c)$& follows from Lemma \ref{case 2 reduction} with $\textbf{\textit{A}}=\textbf{\textit{P}}_{j}$, $\textbf{\textit{B}}=\textbf{\textit{Q}}_{j}$ and $\textbf{\textit{C}}=\textbf{\textit{S}}_{j}$,\\
	$(d)$& follows from the definition of $\textbf{\textit{T}}_{j}$.
\end{tabular}

\end{proof}
\noindent Using the above lemma, we can obtain the following result:
\begin{theorem}
	\label{Theorem:genera_case_2}
	For the $(N,K)$ cache network, when $1\leq N< \big\lceil\frac{K+1}{2}\big\rceil$, achievable memory rate pairs $(M, R)$ must satisfy the constraint
	\begin{align*}	
	\frac{K(N(K+3)-2(N^{2}+1))}{2}&M+\frac{K(K+3-2N)}{2}R\geq \frac{NK(K-2N+3)-2}{2}
	\end{align*}
\end{theorem}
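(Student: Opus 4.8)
The plan is to combine Lemma~\ref{Lemma casse 2 part 1} and Lemma~\ref{Lemma case 2 part 2} and then run a ``descent'' argument on the chain of sets $\textbf{\textit{T}}_{j}$ (for $2N\leq j\leq K$) exactly as was done with the sets $\textbf{\textit{B}}_{i}$ in the proof of Theorem~\ref{Main_Result_1}. First I would add the two inequalities from the two lemmas. The left-hand side becomes a single affine expression in $M$ and $R$; a short bookkeeping computation should show that the $M$-coefficient is $\tfrac{K(N(K+3)-2(N^{2}+1))}{2}$ and the $R$-coefficient is $\tfrac{K(K+3-2N)}{2}$, matching the statement. (This is where I expect to have to be careful, but it is only arithmetic: one has to add $N^{2}(K-2N+3)-3N+1$ to $\tfrac{(K-2N+1)}{2}\big(N(K-2N+2)-2\big)$ and similarly for the $R$-coefficients, then verify the closed forms.)

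On the right-hand side, after adding, I get
$H(W_{[N-1]},Z_{\textbf{\textit{L}}_{N}})+\sum_{i=1}^{N-1}H(W_{[N-1]},Z_{\textbf{\textit{B}}_{i}},X_{\textbf{\textit{d}}_{N+i}})+\sum_{j=2N}^{K}H(W_{[N-1]},Z_{\textbf{\textit{T}}_{j}},X_{\textbf{\textit{d}}_{j}})$
plus an additive constant $N((K-2N+2)N-1)+\tfrac{(K-2N+1)(K-2N)}{2}N$. The key structural facts are (\ref{case 2 LT}), $\textbf{\textit{T}}_{K}\cup\{U_{\overline{K}}\}=\textbf{\textit{L}}_{N}$, and (\ref{case 2 T decrement}), $\textbf{\textit{T}}_{j+1}\cup\{U_{\overline{j+1}}\}=\textbf{\textit{T}}_{j}$, together with (\ref{case 2 BT}), $\textbf{\textit{B}}_{N}=\textbf{\textit{T}}_{2N}$. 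So starting from $H(W_{[N-1]},Z_{\textbf{\textit{L}}_{N}})$ and the term $H(W_{[N-1]},Z_{\textbf{\textit{T}}_{K}},X_{\textbf{\textit{d}}_{K}})$, I apply (\ref{Lemma sum decrement}) (the decrement version of Lemma~\ref{Lemma sum increment}) with $\textbf{\textit{T}}_{i}$ taken along the chain $\textbf{\textit{T}}_{2N},\dots,\textbf{\textit{T}}_{K}$, which collapses $H(W_{[N-1]},Z_{\textbf{\textit{L}}_{N}})+\sum_{j=2N}^{K}H(W_{[N-1]},Z_{\textbf{\textit{T}}_{j}},X_{\textbf{\textit{d}}_{j}})$ to $(K-2N+1)N+H(W_{[N-1]},Z_{\textbf{\textit{T}}_{2N}})=(K-2N+1)N+H(W_{[N-1]},Z_{\textbf{\textit{B}}_{N}})$. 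Then, using the symmetry relabeling $\textbf{\textit{d}}_{i}\leftrightarrow\textbf{\textit{d}}_{N+i}$ already absorbed into the $\textbf{\textit{B}}_{i}$ terms and the recursion $\textbf{\textit{B}}_{i}\cup\{U_{\overline{N+i}}\}=\textbf{\textit{B}}_{i+1}$ from (\ref{case 2 B decrement}), I apply (\ref{Lemma sum decrement}) once more along the chain $\textbf{\textit{B}}_{1},\dots,\textbf{\textit{B}}_{N}$: this turns $H(W_{[N-1]},Z_{\textbf{\textit{B}}_{N}})+\sum_{i=1}^{N-1}H(W_{[N-1]},Z_{\textbf{\textit{B}}_{i}},X_{\textbf{\textit{d}}_{N+i}})$ into $(N-1)N+H(W_{[N-1]},Z_{\textbf{\textit{B}}_{1}})$, and $\textbf{\textit{B}}_{1}=\phi$ by (\ref{case 2 null}), so this last term is just $H(W_{[N-1]})=N-1$.

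Collecting everything, the right-hand side is bounded below by the sum of all the accumulated $N$'s plus $N-1$. The final step is to check that
$N((K-2N+2)N-1)+\tfrac{(K-2N+1)(K-2N)}{2}N+(K-2N+1)N+(N-1)N+(N-1)$
equals $\tfrac{NK(K-2N+3)-2}{2}$; this is again pure algebra (expand, combine the $N^{2}$ and $N$ terms, and note the lone $-1$ comes from $H(W_{[N-1]})=N-1$ rather than $N$). The main obstacle is not any conceptual difficulty but making sure the telescoping chains are set up with the correct endpoints — in particular that the ``seam'' at $\textbf{\textit{T}}_{2N}=\textbf{\textit{B}}_{N}$ and at $\textbf{\textit{T}}_{K}\cup\{U_{\overline K}\}=\textbf{\textit{L}}_{N}$ lines up so that no entropy term is double-counted or dropped, and that the index ranges in the two invocations of (\ref{Lemma sum decrement}) are exactly $[2N,K]$ and $[1,N]$ respectively. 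I would also need the boundary case $N=\lceil (K+1)/2\rceil$ (where $K-2N+1\in\{0,-1\}$) to be handled separately or excluded, consistent with the strict inequality $N<\lceil (K+1)/2\rceil$ in the hypothesis.
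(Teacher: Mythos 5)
Your proposal is correct and follows essentially the same route as the paper's own proof: add Lemma \ref{Lemma casse 2 part 1} and Lemma \ref{Lemma case 2 part 2}, use (\ref{case 2 LT}) together with (\ref{Lemma sum decrement}) to telescope the $\textbf{\textit{T}}_{j}$ chain down to $\textbf{\textit{T}}_{2N}=\textbf{\textit{B}}_{N}$, then use (\ref{case 2 BT}) and a second application of (\ref{Lemma sum decrement}) to telescope the $\textbf{\textit{B}}_{i}$ chain down to $\textbf{\textit{B}}_{1}=\phi$, and the final arithmetic you set up (including the lone $-1$ coming from $H(W_{[N-1]})=N-1$) checks out. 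No gaps.
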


\begin{proof}
	We have,
	\begin{IEEEeqnarray*}{l}
	\frac{K(N(K+3)-2(N^{2}+1))}{2} M+\frac{K(K+3-2N)}{2}R\\
		= \Big[(N^{2}(K-2N+3)-3N+1)M+(N(K-2N+3)-1)R\Big] \\+\left[\frac{(K-2N+1)}{2}\bigg(\big(N(K-2N+2)-2\big) M+(K-2N+2)R\bigg)\right] \\
		\overset{(a)}{\geq}  \left[H(W_{[N-1]},Z_{\textbf{\textit{L}}_{N}})+\sum_{i=1}^{N-1}H(W_{[N-1]},Z_{\textbf{\textit{B}}_{i}},X_{\textbf{\textit{d}}_{N+i}})+N((K-2N+2)N-1)\right]\\+\left[\sum_{j=2N}^{K}H(W_{[N-1]},Z_{\textbf{\textit{T}}_{j}},X_{\textbf{\textit{d}}_{j}})+\frac{(K-2N+1)(K-2N)}{2}N\right]\\
		\overset{(b)}{=} \left[H(W_{[N-1]},Z_{\textbf{\textit{T}}_{K}},Z_{\overline{K}})+ \sum_{j=2N}^{K}H(W_{[N-1]},Z_{\textbf{\textit{T}}_{j}},X_{\textbf{\textit{d}}_{j}})\right]+\sum_{i=1}^{N-1}H(W_{[N-1]},Z_{\textbf{\textit{B}}_{i}},X_{\textbf{\textit{d}}_{(N+i)}})\\+\frac{(K(K-2N+1)+2N-2)}{2}N\\
		\overset{(c)}{\geq} \Big[(K-2N+1)N+H(W_{[N-1]},Z_{\textbf{\textit{T}}_{2N}})\Big] +\sum_{i=1}^{N-1}H(W_{[N-1]},Z_{\textbf{\textit{B}}_{i}},X_{\textbf{\textit{d}}_{(N+i)}})+\frac{(K(K-2N+1)+2N-2)}{2}N\\
		\overset{(d)}{=}\left[H(W_{[N-1]},Z_{\textbf{\textit{B}}_{N-1}},Z_{\overline{2N-1}}) +\!\sum_{i=1}^{N-1}\!H(W_{[N-1]},Z_{\textbf{\textit{B}}_{i}},X_{\textbf{\textit{d}}_{(N+i)}})\right]+\frac{(K(K-2N+3)-2N)}{2}N\\
		\overset{(e)}{\geq} \Big[(N-1)N+H(W_{[N-1]},Z_{\textbf{\textit{B}}_{1}})\Big]+\frac{(K(K-2N+3)-2N)}{2}N\\
		\overset{(f)}{=} H(W_{[N-1]})+\frac{(K(K-2N+3)-2)}{2}N\\
		\geq\frac{NK(K-2N+3)-2}{2}
	\end{IEEEeqnarray*}
	where 
	
	\begin{tabular}{cl}
		$(a)$& follows from Lemma \ref{Lemma casse 2 part 1} and Lemma \ref{Lemma case 2 part 2},\\
		$(b)$& follows from (\ref{case 2 LT}),\\
		$(c)$& follows from (\ref{Lemma sum decrement}) with $\textbf{\textit{T}}_{i}=\textbf{\textit{T}}_{j}$, $l=2N$, $j=K$  and (\ref{case 2 T decrement}),\\
		$(d)$& follows from (\ref{case 2 BT}),\\
		$(e)$& follows from (\ref{Lemma sum decrement}) with $\textbf{\textit{T}}_{i}=\textbf{\textit{B}}_{i}$, $l=1$, $j=N-1$ and (\ref{case 2 B decrement}),\\
		$(f)$& follows from (\ref{case 2 null}).
	\end{tabular}

\end{proof}

\section{Comparison with previous bounds}

In \cite{maddah2014fundamental}, Maddah-Ali and Niesen derived a lower bound on achievable rates using cut set arguments, which was further improved in \cite{ajaykrishnan2015critical,ghasemi2017improved,sengupta2015improved,wang2018improved,yu2017characterizing}. A comparison between these lower bounds and the new lower bounds in Section III, at cache size $M=\frac{N}{K(N-1)}$, is given in TABLE \ref{TABLE:Rate comparison}. 
\begin{table}[h]
	\centering
		\setlength{\tabcolsep}{4.5pt}
	\begin{tabular}{|c|c|c|c|}
		\hline
		Lower Bound & Case I: $\big\lceil\frac{K+1}{2}\big\rceil \leq N\leq K$ & Case II: $1\leq N \leq\big\lceil\frac{K+1}{2}\big\rceil$   \\[0.4em]
		\hline&&\\[-1.5em]
		{Cut Set bound \cite{maddah2014fundamental}}& $N-\frac{N^{2}}{(N-1)K}$ &$N-\frac{N^{2}}{(N-1)K}$  \\ 
		\hline&&\\[-1.5em]
		$\begin{array}{c}
		\text{Ghasemi \&}\\ \text{ Ramamoorthy \cite{ghasemi2017improved}}
		\end{array}$
		& $N-\frac{N^{2}}{(N-1)K}$&$N-\frac{N^{2}}{(N-1)K}$   \\ [0.4em]
		\hline&&\\[-1.5em]
		Ajaykrishnan et al. \cite{ajaykrishnan2015critical}& $N-\frac{N^{2}}{(N-1)K}$&$N-\frac{N^{2}}{(N-1)K}$   \\ [0.4em]
		\hline&&\\[-1.5em]
		{Wang et al. \cite{wang2018improved}}& $N-\frac{N^{2}}{(N-1)K}$&$N-\frac{N^{2}}{(N-1)K}$  \\[0.4em] 
		\hline&&\\[-1.5em]
		Yu et al. \cite{yu2017characterizing}&$N-\frac{N^{2}}{(N-1)K}+\frac{1}{K(N-1)}\left(N-K+\frac{K}{N}\right)$&$N-\frac{N^{2}}{(N-1)K}$  \\ [0.4em]
		\hline&&\\[-1.5em]
		Sengupta et al. \cite{sengupta2015improved}& $N-\frac{N^{2}}{(N-1)K}+\frac{1}{K(N-1)}\left(N-K+\frac{K}{N}\right)$&$N-\frac{N^{2}}{(N-1)K}$  \\ [0.4em]
		\hline&&\\[-1.5em]
		New lower bound
		& $N-\frac{N^{2}}{(N-1)K}+\frac{1}{K(N-1)}$ &$\begin{array}{c}
		N-\frac{N^{2}}{K(N-1)}+\frac{2}{K(N-1)(K+3-2N)}
		\end{array}$ \\[0.4em]
		\hline
	\end{tabular} 
	
	\caption{Comparison with previous lower bounds for $M=\frac{N}{K(N-1)}$}
	\label{TABLE:Rate comparison}
\end{table}
It can be noted that the new bounds improve upon the previous ones. For the $(N,K)$ cache network, the scheme proposed by  G{\'o}mez-Vilardeb{\'o} in \cite{gomez2018fundamental} achieves memory rate pairs $(M,R_{G})=\left(M,\dfrac{KN-1}{K}-(N-1)M\right)$, for $M\in \big[\frac{1}{K},\frac{N}{(N-1)K}\big]$. From Theorem 1, when $\big\lceil\frac{K+1}{2}\big\rceil\leq N\leq K$, we have that all achievable memory rate pairs satisfy the constraint
\begin{align*}
R\geq &\dfrac{KN-1}{K}-(N-1)M
=R_{G}(M)
\end{align*}
Thus we have:
\begin{theorem}
	\label{(N,K)result}
	For the $(N,K)$ cache network, when $\big\lceil\frac{K+1}{2}\big\rceil\leq N\leq K$,  
	the exact rate memory tradeoff is given by
	\begin{equation}
	R^{*}(M)=\dfrac{KN-1}{K}-(N-1)M
	\end{equation}
	where $M\in \left[\frac{1}{K},\frac{N}{K(N-1)}\right]$.
\end{theorem}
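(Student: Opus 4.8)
The plan is to derive the two matching bounds $R^{*}(M)\geq \frac{KN-1}{K}-(N-1)M$ and $R^{*}(M)\leq \frac{KN-1}{K}-(N-1)M$ on the interval $M\in\left[\frac{1}{K},\frac{N}{K(N-1)}\right]$, and then conclude equality. Both ingredients are essentially already in place: the converse is a rearrangement of Theorem \ref{Main_Result_1}, and the achievability is the scheme of G\'omez-Vilardeb\'o.

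For the converse, I would invoke Theorem \ref{Main_Result_1}: under the hypothesis $\big\lceil\frac{K+1}{2}\big\rceil\leq N\leq K$, every achievable memory rate pair $(M,R)$ satisfies $K(N-1)M+KR\geq KN-1$. Dividing by $K$ and rearranging gives $R\geq \frac{KN-1}{K}-(N-1)M$. Since this holds for every $R$ for which $(M,R)$ is achievable, taking the infimum over such $R$ yields $R^{*}(M)\geq \frac{KN-1}{K}-(N-1)M$ for all $M$ (in particular on the stated interval).

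For the achievability, I would appeal to the coded-placement scheme of G\'omez-Vilardeb\'o \cite{gomez2018fundamental}, which for the $(N,K)$ network attains the memory rate pair $\big(M,R_{G}(M)\big)$ with $R_{G}(M)=\frac{KN-1}{K}-(N-1)M$ for every $M\in\left[\frac{1}{K},\frac{N}{K(N-1)}\right]$. Hence the pair $\left(M,\frac{KN-1}{K}-(N-1)M\right)$ is achievable on this interval, so $R^{*}(M)\leq \frac{KN-1}{K}-(N-1)M$ there. Combining this with the converse forces equality: $R^{*}(M)=\frac{KN-1}{K}-(N-1)M$ on $\left[\frac{1}{K},\frac{N}{K(N-1)}\right]$.

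The substantive work of the argument is entirely contained in Theorem \ref{Main_Result_1}, which is already proved; the only remaining step is bookkeeping — confirming that the achievable curve of \cite{gomez2018fundamental} equals the affine function $\frac{KN-1}{K}-(N-1)M$ on the \emph{whole} interval $\left[\frac{1}{K},\frac{N}{K(N-1)}\right]$, including the two endpoints, and is consistent with the $M\leq\frac{1}{K}$ regime of \cite{chen2014fundamental} (so that at $M=\frac{1}{K}$ one recovers $R=N-NM$). Once that is checked, no further argument is required, and the theorem follows immediately from Theorem \ref{Main_Result_1} together with the known achievability.
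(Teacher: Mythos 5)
Your proposal is correct and is essentially identical to the paper's own argument: the paper likewise combines the lower bound of Theorem \ref{Main_Result_1} (rearranged as $R\geq \frac{KN-1}{K}-(N-1)M$) with the achievability of the G\'omez-Vilardeb\'o scheme, which attains $R_{G}(M)=\frac{KN-1}{K}-(N-1)M$ on $\left[\frac{1}{K},\frac{N}{K(N-1)}\right]$. No gap to report.
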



\begin{remark}
	In \cite{gomez2018fundamental}, with the help of  the lower bounds derived in \cite{sengupta2015improved} and \cite{yu2017characterizing}, G{\'o}mez-Vilardeb{\'o} showed that when $K=N$ and $M=\frac{1}{N-1}$, his scheme is optimal. We extend his result to the case where $\big\lceil\frac{K+1}{2}\big\rceil \leq N \leq K$ in Theorem \ref{(N,K)result}.
\end{remark}

\section{Conclusions}
In this paper we considered the canonical $(N,K)$ cache network where $N \leq K$ and $M\in\big[0,\frac{N}{K}\big]$. We derived a new set of lower bounds on the achievable rate when each file in the server is requested by at least one user. Using these lower bounds, we showed that when $\big\lceil\frac{K+1}{2}\big\rceil \leq N\leq K$ the scheme proposed in \cite{gomez2018fundamental} is optimal for $M\in\left[\frac{1}{K},\frac{N}{K(N-1)}\right]$. For the case $1\leq N\leq \big\lceil\frac{K+1}{2}\big\rceil$, the new lower bound was shown to improve upon the previous lower bounds, but a matching scheme is still not known. The work presented forms another step in the attempt to find a characterization  of the exact rate memory tradeoff for coded caching and is illustrated in Fig. \ref{(N,K)bound plot}.

\begin{figure}[h]
	\centering
	\begin{tikzpicture}[line cap=round,line join=round,x=3.8cm,y=2cm,
spy/.style={%
	draw,green,
	line width=1pt,
	rectangle,inner sep=0pt,
},
]

\def\spyviewersize{2.4cm}

\def\spyonclipreduce{0.8pt}

\def\spyfactorI{50}
\coordinate (spy-on 1) at (25/12,1/3);
\coordinate (spy-in 1) at (2.3,1.5);

\def\pic{\coordinate (O) at (0,0);
	\draw [ultra thin,step=3,black] (0,0) grid (3,3);
	%
	\foreach \x in {0,1/4,3/4,6/4,9/4,3}
	\draw[shift={(\x,0)},color=black,thin] (0pt,1pt) -- (0pt,-1pt);

	\foreach \y in {0,1,2,3}
	\draw[shift={(0,\y)},color=black,thin] (1pt,0pt) -- (-1pt,0pt);
	\draw[color=black] (6cm,-18pt) node[left] { Cache size $M$};
	\draw[color=black] (-18pt,3.cm) node[left,rotate=90] { Rate $R$};
%
%

	\draw[smooth,blue!60!white,mark=otimes,samples=1000,domain=0.0:2.2,mark = $\otimes$,line width=2pt]
	{(0,3)--(1/4,9/4)  node[mark size=2.5pt,line width=2pt]{$\pgfuseplotmark{square*}$} (9/4,1/4)node[mark size=2.5pt,line width=2pt]{$\pgfuseplotmark{square*}$}--(3,0)};
	\draw[smooth,blue!60!white,mark=otimes,samples=1000,domain=0.0:2.2,mark = $\otimes$,dash pattern=on 2pt off 2pt,line width=1pt]
	{(1/4,9/4)--(1/4,0)  node[mark size=0.5pt,line width=2pt,label={below:$\frac{1}{K}$}]{$\pgfuseplotmark{square*}$} (9/4,0)node[mark size=0.5pt,line width=2pt,label={below:$\frac{N(K-1)}{K}$}]{$\pgfuseplotmark{square*}$}--(9/4,1/4)};
	
	\draw[smooth,red!70!black,samples=1000,domain=0.0:2.2,mark =otimes*,line width=2pt]
	{(0.45,39/20)node[mark size=2.5pt,line width=2pt,]{$\pgfuseplotmark{square*}$}--(1/4,9/4)};
	\draw[smooth,red!70!black,samples=1000,domain=0.0:2.2,mark =otimes*,dash pattern=on 2pt off 2pt,line width=1pt]
	{(0.45,0)node[mark size=0.5pt,line width=2pt,label={below:$\frac{N}{K(N-1)}$}]{$\pgfuseplotmark{square*}$}--(0.45,39/20)};
	
	

}

\pic

\draw [gray!50!white,line width=1pt,fill=white] (0.9,2.2)rectangle (3,3);
\begin{scope}[shift={(0.7,2.5)}] 
\draw [smooth,samples=1000,domain=0.0:2.2,red!60!black,mark=otimes,line width=2pt] 
{(0.25,0) --node [mark size=3pt,line width=0.1pt]{$\pgfuseplotmark{square*}$} (0.5,0)}
node[right]{New Rate Memory Tradeoff};

\draw [yshift=0.75\baselineskip,smooth,samples=1000,domain=0.0:2.2,blue!70,mark=otimes,line width=2pt] 
{(0.25,0) --node [mark size=3pt,line width=0.1pt]{$\pgfuseplotmark{square*}$} (0.5,0)}
node[right]{Rate Memory Tradeoff \cite{maddah2014fundamental,chen2014fundamental}};
\end{scope}

\end{tikzpicture}
	\caption{Rate memory tradeoff for the $(N,K)$ cache network  when $\big\lceil\frac{K+1}{2}\big\rceil< N\leq K$}
	\label{(N,K)bound plot}
\end{figure}
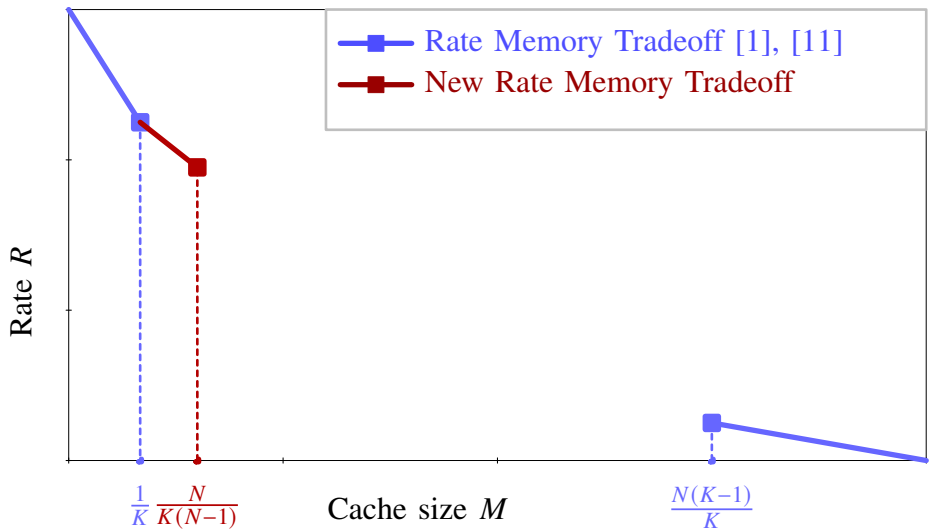


\end{document}